\newtheorem*{lemma*}{Lemma}
\newtheorem{theorem}{Theorem}[section]
\newtheorem{corollary}[theorem]{Corollary}
\newtheorem{lemma}[theorem]{Lemma}
\newtheorem{proposition}[theorem]{Proposition}
\theoremstyle{definition}
\newtheorem{definition}{Definition}
\newtheorem{remark}[theorem]{Remark}
\newcommand{\ent}{h}
\newcommand{\tmix}{t_{\rm{mix}}}
\DeclareMathOperator{\diam}{diam}
\renewcommand{\setminus}{\backslash}
\newtheorem{maintheorem}{Theorem}
\def\Ld L
\def\L{\mathbb L}
\def\been{\begin{enumerate}}
\def\bee{\begin{example}}
\def\beit{\begin{itemize}}
\def\bel{\begin{lemma}}
\def\bepr{\begin{proposition}}
\def\bep{\begin{proof}}
\def\bet{\begin{theorem}}
\def\bec{\begin{corollary}}
\def\enc{\end{corollary}}
\def\de\de 
\def\diam{\ms{diam}}
\def\enen{\end{enumerate}}
\def\ene{\end{example}}
\def\enit{\end{itemize}}
\def\enl{\end{lemma}}
\def\enpr{\end{proposition}}
\def\enp{\end{proof}}
\def\ent{\end{theorem}}
\def\ms{\mathsf}
\def\de{\delta}
\def\to \infty{\uparrow\infty}
\def\bef{\begin{figure}[!h]}
\def\enf{\end{figure}}
\begin{document}

\title{Exponentially slow Mixing arising from Entropic Repulsion in $p$-SOS model}

\author{Seokun Choi}
\begin{abstract}

We investigate the Glauber dynamics of the generalized (2+1)-dimensional $p$-SOS model($1<p<\infty$) under a hard floor constraint. This setting induces entropic repulsion: the integer-valued interface height is forced to remain above the wall and consequently rises to a typical height $H(p,L)$ that depends on both the parameter $p$ and the lattice size $L$. The phenomenon of entropic repulsion has been extensively studied in a variety of random interface models. In the classical SOS model ($p=1$), \cite{caputo2016scaling, caputo2014dynamics} derived an exponential lower bound for the mixing time, demonstrating that the Glauber dynamics mixes only after an exponentially long time in the low-temperature regime (large $\beta$, the inverse temperature). However, beyond this case, no rigorous lower bounds were previously known: even for the widely studied Discrete Gaussian model ($p=2$), the metastable slowdown predicted by the entropic repulsion picture had remained an open problem. On the equilibrium side, \cite{lubetzky2016harmonic} obtained sharp large-deviation principles and precise estimates of typical and maximal heights for all $1\le p<\infty$, but the dynamical consequences of these results had not been established.

Our main contribution is to close this gap by proving that exponentially slow(stretched-exponential) mixing arising from entropic repulsion persists throughout the regime $1<p<\infty$. Specifically, we establish an exponential lower bound, showing that the mixing time satisfies $\tau_{\mathrm{mix}}\ge \exp{\left(cL^{1-o(1)}\right)}$ for some $c>0$ depending on $p$ and $\beta$. In addition, we provide a refined metastability analysis, proving that the hitting time of an intermediate level $aH(p,L)$ is at least $\exp{\left(cL^{a^{d(p)}-o(1)}\right)}$, where $0<a<1$ and $d(p)$ is a positive function depending on $p$. The proof relies on an extension of the classical Peierls-type contour estimates, originally developed for $p=1$, to the nonlinear $p$-SOS setting. Taken together, these results demonstrate that entropic repulsion induces uniformly slow mixing across the entire $p$-SOS family, thereby extending a phenomenon that had previously been established only for $p=1$.

\end{abstract}
\address[Seokun Choi]{Department of Mathematical Sciences, KAIST, South Korea}
\email{seokunchoi@kaist.ac.kr}
\subjclass[2010]{}
\keywords{}
\maketitle
\tableofcontents

\section{Introduction}

\subsection{Background}
The Solid-On-Solid (SOS) model, originally introduced by Temperley \cite{temperley1952statistical} in 1952 and also known as the Onsager-Temperley sheet, was proposed to describe the microscopic behavior of the crystal interface. In modern statistical mechanics, this model is viewed as an effective description of the interface separating the positive and negative phases phases in the $(d+1)$-dimensional Ising model, particularly in the low-temperature regime. In this setting, the SOS interface approximates the Ising interface by suppressing overhangs and bulk fluctuations, which are energetically costly at low temperatures.

We define the SOS model under general boundary conditions as follows. For a finite domain $\Lambda \subset \mathbb{Z}^2$, consider integer-valued height functions 
\[
  \eta:\Lambda \rightarrow \mathbb{Z}, \qquad \eta \equiv \phi \;\; \text{on } \Lambda^c,
\]
where $\phi:\Lambda^c \rightarrow \mathbb{Z}$ is a fixed boundary condition. 
The Hamiltonian is given by
\[
  H^\phi_\Lambda(\eta) \;=\; \sum_{\substack{x\in \Lambda,\, y\in \mathbb{Z}^2 \\ |x-y|=1}} 
  \big|\eta(x) - \xi(y)\big|,
  \qquad 
  \xi(y) \;=\; 
  \begin{cases}
    \eta(y), & y \in \Lambda,\\
    \phi(y), & y \in \Lambda^c,
  \end{cases}
\]
which includes both the interactions inside $\Lambda$ and those with the prescribed boundary condition. 
The corresponding Gibbs measure is then
\[
  \hat{\pi}_\Lambda^\phi(\eta) \;=\; \frac{1}{Z^{\phi}_\Lambda}\,
  \exp\!\big(-\beta H^\phi_\Lambda(\eta)\big),
\]
where $\beta>0$ is the inverse temperature and $Z^\phi_\Lambda$ is the partition function ensuring normalization. In the special case of zero boundary condition $\phi \equiv 0$, we denote the Gibbs measure by $\hat{\pi}_\Lambda^0(\eta),$ and for convenience we shall simply write $\hat{\pi}_\Lambda(\eta)$.

The two-dimensional case $d=2$ plays a distinguished role, being the unique dimension where a roughening transition takes place. 
For $d=1$, the interface stays non-localized for all inverse temperatures $\beta>0$, 
implying that the average height at a given site diverges as the system size increases 
\cite{fisher1984walks, temperley1956combinatorial}.
In contrast, for $d\geq 3$, Peierls-type arguments show that the interface is localized 
\cite{bricmont1982surface}, with height fluctuations uniformly bounded in expectation. 
In two dimensions, however, a transition occurs between delocalized and localized 
behavior \cite{brandenberger1982decay, gallavotti2007some}: at high temperatures the 
interface becomes rough, while at low temperatures it is localized. 
This critical phenomenon was first rigorously analyzed in the foundational works of 
Fröhlich and Spencer \cite{frohlich1981kosterlitz, frohlich1983berezinskii}.

To capture a broader range of interface behaviors, the classical SOS model 
\cite{caputo2016scaling, lubetzky2016harmonic} can be extended to the so-called $p$-SOS model, defined for $p\geq 1$. Configurations follow the SOS model, that is, height functions taking integer values on $\Lambda \subset \mathbb{Z}^2$ together with a specified boundary condition $\phi:\Lambda^c \rightarrow \mathbb{Z}$.
The Hamiltonian is given by
\[
  H_\Lambda^{p,\phi}(\eta) \;=\; 
  \sum_{\substack{x \in \Lambda,\, y \in \mathbb{Z}^2 \\ |x-y|=1}}
  \big|\eta(x) - \xi(y)\big|^p,
  \qquad 
  \xi(y) \;=\;
  \begin{cases}
    \eta(y), & y \in \Lambda,\\
    \phi(y), & y \in \Lambda^c,
  \end{cases}
\]
and the corresponding Gibbs measure is
\begin{align}\label{defgibb}
    \hat{\pi}_\Lambda^{p,\phi}(\eta) \;=\; \frac{1}{Z_\Lambda^{p,\phi}}\,
  \exp\!\big(-\beta H_\Lambda^{p,\phi}(\eta)\big),    
\end{align}
where $\beta>0$ is the inverse temperature, $p \in [1,\infty)$ is fixed, 
and $Z_\Lambda^{p,\phi}$ is the partition function ensuring normalization. In the special case of zero boundary condition $\phi \equiv 0$, we write $\hat{\pi}_\Lambda^{p,0}(\eta)$ and, for convenience, simply denote it by $\hat{\pi}_\Lambda^p(\eta)$.

This framework includes two notable special cases: when $p = 1$, the model reduces to the classical SOS model, while $p = 2$ corresponds to the Discrete Gaussian(DG) model. The parameter $p$ thus interpolates between models with different types of gradient penalization, allowing for a unified approach to interface fluctuation analysis. In this work, we aim to extend certain results previously established for the case $p = 1$ to the broader range $1 \leq p < \infty$. Recent developments show a growing interest in such generalized models \cite{bauerschmidt2024discrete1, bauerschmidt2024discrete2, caputo2017entropic, lubetzky2016harmonic}, as they offer a flexible framework for understanding a wide class of random interfaces.

One of the central phenomena in random interface (surface) models with hard constraints, for instance when a floor prevents penetration, is the \textit{entropic repulsion}.
 When the interface is forced to stay above a fixed wall, the entropic term in the Gibbs distribution pushes the interface upward, in opposition to the energy term that favors minimizing gradients.
To model this situation mathematically, one considers a sequence of expanding domains $\Lambda_L :=\{1, 2, \cdots, L\}^2\subset\mathbb{Z}^2$ and studies the field $\eta$ conditioned to be nonnegative on $\Lambda_L$. That is, one defines the conditional measure
\[
\bar{\pi}_{\Lambda_L}^{p}(\cdot) := \hat{\pi}_{\Lambda_L}^p(\cdot \mid \Omega_{\Lambda_L}^+), \quad \text{where } \Omega_{\Lambda_L}^+ = \{\eta : \eta_x \geq 0 \text{ for all } x \in {\Lambda_L}\}.
\]
This describes a random interface constrained to remain above a wall. A notable feature of this conditioning is that the interface’s roughness remains essentially unchanged, while the entire interface shifts upward, keeping a finite distance from the wall. This behavior reflects the interface’s tendency to preserve its freedom to fluctuate, which lies at the core of the entropic repulsion phenomenon. From a mathematical standpoint, entropic repulsion has been rigorously investigated in various settings. In particular, the phenomenon was first rigorously analyzed in the context of SOS, 
Discrete Gaussian, and related random interface models by Bricmont, El Mellouki, 
and Fröhlich (1986) \cite{bricmont1986random}, while subsequent works by Deuschel, 
Giacomin, and others \cite{bolthausen2001entropic, bolthausen1995entropic} 
investigated it in the setting of the Gaussian free field (GFF). Such behavior is observed not only in the classical SOS model but also in generalized $p$-SOS models and continuous deformations thereof. 

Building on this foundation, more recent research has focused on sharper quantitative properties of the interface. In particular, the study of one-point large deviations and the equilibrium behavior in $p$-SOS models has become increasingly active \cite{lubetzky2016harmonic}. In this context, it is natural to extend the investigation of entropic repulsion to the $p$-SOS setting, where a first step is to ensure the existence of the infinite-volume Gibbs measure. For sufficiently large $\beta$ (low temperature), $\hat{\pi}^p$ is well-defined and unique, by standard low-temperature methods such as contour expansion and the Pirogov–Sinai theory.
In the SOS ($p=1$) and DG ($p=2$) cases, this was proved in \cite{bricmont1986random} (Section~2.1), 
where it is shown that the height distribution has an exponential tail 
and correlations decay exponentially. 
The same reasoning extends to general $p \geq 1$.

In the entropic repulsion regime, the $p$-SOS interface concentrates around a well-defined typical height, with nearly all heights lying close to the same level. The \emph{typical height under entropic repulsion}, which depends on both the parameter $p$ and the system size $L$, is defined based on ~\cite{lubetzky2016harmonic} by
\begin{equation}\label{typicalheight}
H(p, L) := \max \left\{ h \in \mathbb{N} : \hat{\pi}^p(\eta_0 \ge h) \ge \frac{5\beta}{L} \right\}.
\end{equation}
Here and throughout, we write $a \asymp b$ if both $a = O(b)$ and $b = O(a)$. The large deviation rate of $\hat{\pi}^p(\eta_0 \ge h)$ and the corresponding 
typical height $H(p,L)$ for different values of $p$ are summarized in Table~\ref{tab:HpL}, based on the results of 
\cite{caputo2016scaling, caputo2014dynamics, lubetzky2016harmonic}. While equilibrium properties of entropic repulsion are by now well understood for both the SOS model ($p=1$) and its generalizations to $1\le p\leq\infty$ (see e.g. \cite{lubetzky2016harmonic}), the dynamical behavior has remained much less explored. In particular, the phenomenon of slow mixing arising from entropic repulsion was established by Caputo et al.
 \cite{caputo2014dynamics}, who established an exponentially slow mixing time in the SOS case $p=1$. Beyond this single case, no analogous lower bound was known, not even for the classical Discrete Gaussian model ($p=2$). In this work, we focus on extending the lower bound on the mixing time under entropic repulsion, previously obtained for the SOS model in~\cite{caputo2014dynamics}, to the more general $p$-SOS model. Since our work builds directly on the framework of \cite{caputo2014dynamics}, several of our formulations and terminology follow their presentation, with adaptations to the generalized $p$-SOS setting.

\begin{table}[H]
\centering
\caption{Large deviation rate of $\hat{\pi}^p(\eta_0 \ge h)$ and the corresponding typical height $H(p,L)$ for different values of $p$. 
All asymptotics follow from the cited works. 
Here $c_p$ denotes a constant depending only on $p$, and $\varepsilon_\beta \rightarrow 0$ as $\beta \rightarrow \infty$ and we write $a\asymp b$ if there exist constants $c_1,c_2>0$
such that $c_1 b \le a \le c_2 b$.}
\label{tab:HpL}
\begin{tabular}{|c|c|c|}
\hline
$p$ & $-\log \hat{\pi}^p(\eta_0 \ge h)$ & $H(p, L)$ \\
\hline
$1$ 
& $4\beta h + \varepsilon_\beta$
\;{\scriptsize\cite[Proposition 3.9]{caputo2014dynamics}}
& $\left\lfloor \frac{1}{4\beta}\log L \right\rfloor$
\;{\scriptsize\cite[Theorem 1]{caputo2016scaling}} \\
\hline
$1 < p < 2$ 
& $(c_p \beta + o(1))\, h^p$
\;{\scriptsize\cite[Theorem 5.1]{lubetzky2016harmonic}}
& $\left( \frac{1+o(1)}{c_p\beta}\log L \right)^{\frac{1}{p}+o(1)}$
\;{\scriptsize\cite[Corollary 5.2]{lubetzky2016harmonic}} \\
\hline
$p = 2$ 
& $(2\pi\beta + o(1))\, \dfrac{h^2}{\log h}$
\;{\scriptsize\cite[Theorem 2]{lubetzky2016harmonic}}
& $\left( \frac{1+o(1)}{4\pi\beta}\log L\log\log L \right)^{\frac{1}{2}+o(1)}$
\;{\scriptsize\cite[Theorem 3.1]{lubetzky2016harmonic}} \\
\hline
$2 < p < \infty$ 
& $\asymp \beta h^2$
\;{\scriptsize\cite[Theorem 5.5]{lubetzky2016harmonic}}
& $\asymp \left( \frac{\log L}{\beta} \right)^{\frac{1}{2}}$
\;{\scriptsize\cite[Corollary 5.6]{lubetzky2016harmonic}} \\
\hline
\end{tabular}
\end{table}

\subsection{$p$-SOS Model, Associated Gibbs Measures, and Glauber Dynamics}
\label{sec-model}
We next set up the Glauber dynamics (also known as heat bath dynamics or the Gibbs sampler) for the $p$-SOS model in $(2+1)$ dimensions
, the process we shall analyze in this work. For clarity, we also collect the Gibbs measure notation that will be used throughout.
Recall that $\Lambda_L := \{1, \ldots, L\}^2 \subset \mathbb{Z}^2$ and let $\partial \Lambda_L$ denote its outer boundary. 
Given a boundary condition $\phi : \partial \Lambda_L \rightarrow \mathbb{Z}$, 
the Hamiltonian $H_{\Lambda_L}^{p,\phi}$ and the corresponding Gibbs measure 
$\hat{\pi}_{\Lambda_L}^{p,\phi}$ were introduced above.  
In addition to the unconstrained Gibbs measure $\hat{\pi}_{\Lambda_L}^{p,\phi}$, we shall work with two constrained versions:
\begin{itemize}
    \item[(i)] With the constraint $\eta_x \ge 0$ for all $x \in \Lambda_L$ (no upper bound, i.e.\ $n_+=\infty$),
    supported on
    \[
      \Omega^+_{\Lambda_L} := \{ \eta \in \mathbb{Z}^{\Lambda_L} : \eta_x \ge 0 \;\; \forall x \in \Lambda_L \},
    \]
    we denote the corresponding Gibbs measure by $\bar{\pi}_{\Lambda_L}^{p,\phi}$.

    \item[(ii)] With the restriction $0 \le \eta_x \le n_+$ for all $x \in \Lambda_L$, where $n_+ \in \mathbb{N}$ and $n_+ \ge \log L$,
    supported on
    \[
      \Omega^+_{\Lambda_L,n_+} := \{ \eta \in \mathbb{Z}^{\Lambda_L} : 0 \le \eta_x \le n_+ \;\; \forall x \in \Lambda_L \},
    \]
    we denote the corresponding Gibbs measure by $\pi_{\Lambda_L}^{p,\phi}$.
\end{itemize}

We next define the Glauber dynamics as a discrete-time Markov chain $(\eta(t))_{t \ge 0}$ 
on $\Omega^+_{\Lambda_L,n_+}$. At each step:
\begin{itemize}
  \item A site $x \in \Lambda_L$ is chosen uniformly at random;
  \item The height $\eta_x$ is resampled from 
  $\pi_{\Lambda_L}^{p,\phi}(\cdot \mid \eta_{\Lambda_L \setminus \{x\}})$.
\end{itemize}

Let $\mathbb{P}_\nu$ denote the law of the chain with initial distribution $\nu$, 
and let $\mu_t^\nu$ be its distribution at time $t$. 
When starting from a fixed configuration $\eta$, we write $\mu_t^\eta$. 
This Markov chain is reversible with respect to $\pi_{\Lambda_L}^{p,\phi}$. 
The total variation mixing time is defined as
\[
  t_{\mathrm{mix}} := \inf \Big\{ t > 0 : \max_{\eta \in \Omega^+_{\Lambda_L,n_+}} 
  \| \mu_t^\eta - \pi_{\Lambda_L}^{p,\phi} \|_{\mathrm{TV}} \le \tfrac{1}{4} \Big\}.\footnotemark
\]
\footnotetext{The particular choice of $1/4$ is inessential: any fixed 
$\varepsilon \in (0,1)$ would give an equivalent definition of the mixing 
time up to constant factors.}
gives an equivalent definition of the mixing time up to constant factors.

\subsection{Main result}
We now turn to our main results concerning the Glauber dynamics of the
$p$-SOS model under entropic repulsion. While equilibrium aspects of
entropic repulsion are by now well understood, rigorous dynamical results
had previously been limited to the SOS case $p=1$
\cite{caputo2014dynamics}. Our aim here is to establish exponential lower
bounds on the mixing time, showing that entropic repulsion induces slow
mixing uniformly across the entire regime $1<p<\infty$.
However, our approach does not yield a sharp exponential lower bound of
the form $\exp(cL)$. This limitation originates from the behavior of the
one-point tail probabilities in the $p$-SOS model for $p>1$, which exhibit
abrupt variations between successive height levels. In particular, unlike
in the classical SOS case ($p=1$), the ratio between
$\hat\pi_p(\eta_0 \ge H(p,L))$ and
$\hat\pi_p(\eta_0 \ge H(p,L)-1)$ may vary by subpolynomial factors in $L$.
As a consequence, the scale of the critical bottleneck depends sensitively
on the system size $L$, and the present bottleneck argument cannot produce
a uniform lower bound of order $\exp(cL)$.

This difficulty appears to reflect a broader structural feature of the
$|\nabla\phi|^p$ models when $p>1$. Recent work of \cite{chen2025limiting} on
the geometry of level lines under entropic repulsion shows that, in contrast
to the SOS case, the one-point height distribution induces a separation of
scales between successive level lines, preventing the existence of a single
characteristic height scale governing large deviations. Their analysis,
which concerns equilibrium fluctuations rather than dynamics, likewise
relies on comparison and stochastic domination arguments instead of sharp
energetic estimates, highlighting the same lack of uniform control across
adjacent height levels that arises in the present work.

It therefore remains unclear whether this phenomenon reflects the true
behavior of the mixing time itself, or merely a limitation of the analytical
method employed here. In particular, it is an open problem whether more
refined techniques could upgrade the lower bound to a genuinely exponential
one of the form $\exp(cL)$, or whether the stretched-exponential scale
$\exp\!\bigl(cL^{1-o(1)}\bigr)$ is in fact optimal. We leave this question
for future work.

\begin{maintheorem}
\label{maintheorem}
Fix $1<p<\infty$. For sufficiently large inverse temperature $\beta$, 
there exists a constant $c:=c(\beta,p)>0$ such that the following holds.  
Consider the $(2+1)$-dimensional $p$-SOS model on $\Lambda_L$ with zero boundary condition, a floor at $0$, and a ceiling at $n_+$ satisfying $n_+\geq \log L$.  
For all sufficiently large $L \in \mathbb{N}$, the mixing time $\tmix$ of the associated Glauber dynamics satisfies
\begin{equation}
\label{eq-tmix-principale}
  \exp{\left(cL^{1-o(1)}\right)} \;\leq\; \tmix.
\end{equation}
\end{maintheorem}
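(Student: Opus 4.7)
The plan is to establish \eqref{eq-tmix-principale} via a conductance (bottleneck) lower bound for reversible Markov chains. For any $S \subset \Omega^+_{\Lambda_L, n_+}$, write
\[
Q(S, S^c) := \sum_{\eta \in S,\, \eta' \in S^c} \pi_{\Lambda_L}^{p, 0}(\eta)\, P(\eta, \eta'),
\]
and recall the standard bottleneck inequality $\tmix \gtrsim \pi(S)(1 - \pi(S))/Q(S, S^c)$. It therefore suffices to exhibit a set $S$ with $\pi(S), \pi(S^c) \ge c_1 > 0$ and $Q(S, S^c) \le e^{-cL}$.

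\emph{Step 1 (choosing the bottleneck set).} I would pick $H^\star = H^\star(p, \beta, L) \in \mathbb{N}$ at a ``balance level'' close to the typical height $H(p, L)$, so that the one-point probability $\pi_{\Lambda_L}^{p,0}(\eta_x \ge H^\star)$ is approximately $\tfrac{1}{2}$ at an interior site $x$; such an $H^\star$ exists by the one-point asymptotics in Table~\ref{tab:HpL}. Define the macroscopic observable
\[
N(\eta) := \big|\{x \in \Lambda_L : \eta_x \ge H^\star\}\big|,
\]
and let $S := \{\eta : N(\eta) \le L^2/2\}$. A median argument combined with FKG-type monotonicity for the $p$-SOS floor measure should give $\pi(S), \pi(S^c) \ge c_1$.

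\emph{Step 2 (Peierls contour bound).} The dynamical boundary $\partial S$ consists of configurations $\eta \in S$ with $N(\eta) = \lfloor L^2/2 \rfloor$ admitting a one-site update into $S^c$. For any such $\eta$, the super-level regions $\{x : \eta_x \ge H^\star\}$ and $\{x : \eta_x < H^\star\}$ both have area $\asymp L^2/2$, so by planar isoperimetry the separating dual contour $\gamma(\eta)$ has length $|\gamma(\eta)| \ge c_2 L$. Each edge of $\gamma(\eta)$ joins neighbors with $|\eta_x - \eta_y| \ge 1$, contributing at least $1^p = 1$ to $H^{p, 0}_{\Lambda_L}(\eta)$. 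A Peierls-type contour bound, adapted from~\cite{caputo2014dynamics}, would then give
\[
\pi(\partial S) \;\le\; \sum_{\ell \ge c_2 L} L^2 \cdot C^\ell \cdot e^{-\beta \ell} \;\le\; e^{-cL}
\]
once $\beta$ is large enough to dominate the contour-entropy constant $C$. Since $Q(S, S^c) \le \pi(\partial S)$, this closes the argument.

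\emph{Main obstacle.} The hard part is executing the Peierls estimate rigorously for $p > 1$. The pointwise bound $|\eta_x - \eta_y|^p \ge 1$ formally reproduces the SOS energy accounting, but the cost-per-bond can be much larger when heights on the two sides of a contour drift far from $H^\star$ under the nonlinear penalty $|\cdot|^p$. Converting this pointwise bound into a global partition-function estimate calls for a \emph{contour surgery} that flattens a given $\gamma$ against a reference flat configuration at height $H^\star$; one must then show that the Gibbs weight of the resampled interior factorizes cleanly into a contour-length cost $e^{-\beta|\gamma|}$ and a harmless bulk term. The $p$-dependent nonlinearity makes this factorization substantially more delicate than in the $p = 1$ case and is the technical heart of the proof. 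A secondary obstacle is pinning down $H^\star$ and verifying $\pi(S) \asymp 1$, which should follow by combining Table~\ref{tab:HpL} with FKG monotonicity for the $p$-SOS Gibbs measure under a floor.
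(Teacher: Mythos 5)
Your conductance/bottleneck strategy is a genuinely different route from the paper's, which instead uses a hitting-time argument: the paper sets $\mathcal{B} = \{\#\{x : \eta_x \ge h+1\} \ge L^2/2\}$ with $h = H(p,L)-K$ so that $\pi(\mathcal{B}) = 1-o(1)$, starts the Glauber dynamics from the conditional law $\nu = \pi(\cdot \mid A)$ where $A$ is the event that no $h$-contour encloses area $> \delta L^2$, and then shows via a union bound over time steps that both $\nu(\partial A)$ and $\nu(\mathcal{B})$ are $e^{-cL}$, so that $\tau_{\mathcal{B}} \ge e^{cL}$ with high probability. Your approach could in principle work, but as sketched it has two significant gaps, and the second is the more serious one.

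\textbf{Gap 1: the balance level.} You want $H^\star$ with $\pi^{p,0}_{\Lambda_L}(\eta_x \ge H^\star) \approx 1/2$ and then infer $\pi(S), \pi(S^c) \ge c_1$. Under entropic repulsion the measure is \emph{concentrated}: by Theorem~\ref{thm-h-event}, with probability $1-\varepsilon$ at least a $(1-\varepsilon)$-fraction of sites sit above $H(p,L)-K$, so the macroscopic variable $N(\eta)$ is strongly skewed towards $L^2$ at the level $h = H(p,L)-K$ and towards $0$ a few levels higher. There is no reason for $N$ to have median $L^2/2$ at any integer level: heights are discrete, the jump in $\pi(\eta_x \ge h)$ between consecutive $h$'s may be large, and $N$ need not concentrate around $L^2\pi(\eta_x\ge H^\star)$ because of correlations. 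So the claim $\pi(S)\wedge\pi(S^c)\ge c_1$ is not justified. (If $\pi(S)$ is instead exponentially small, the conductance inequality still yields a lower bound of $\pi(S)/Q(S,S^c)$, but then you must show $\pi(\partial S)/\pi(S)$ is exponentially small, which is a relative bound inside a rare set --- precisely the delicate estimate that the paper performs by conditioning on $A$.)

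\textbf{Gap 2: the Peierls estimate for $\partial S$.} Two issues. First, a configuration with $N(\eta) = \lfloor L^2/2 \rfloor$ need not contain a single long contour: the super-level region can be a union of many small components, and your bound $\sum_{\ell \ge c_2 L} L^2 C^\ell e^{-\beta\ell}$ only controls the probability that a single contour of length $\ge c_2 L$ exists. The paper handles this case with a separate multi-contour argument (Lemma~\ref{probB}) together with an independence/concentration bound across disjoint contour interiors. Second, and more importantly, your proposal uses the \emph{floorless} Peierls cost $e^{-\beta|\gamma|}$. In the floor-constrained measure the correct estimate is the one in Theorem~\ref{prop4.1},
\[
\bar{\pi}^{p,0}_\Lambda(\mathcal{C}_{\gamma,h}) \le \exp\!\bigl(-\beta|\gamma| + \hat{\pi}^p(\eta_0 \ge h)\,|\Lambda_\gamma| + c_1 e^{-4\beta h}|\gamma|\log|\gamma|\bigr),
\]
and the area term $\hat{\pi}^p(\eta_0\ge h)\,|\Lambda_\gamma|$ is not harmless. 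At $h$ near $H(p,L)$ one has $\hat{\pi}^p(\eta_0\ge h) \asymp \beta/L$, so a contour with $|\Lambda_\gamma|\asymp L^2$ contributes a \emph{positive} correction of order $\beta L$, on the same scale as the perimeter penalty $\beta|\gamma|$. Without control of the enclosed area the sign of the exponent is not clear. The paper's key device is to restrict to the event $A$ (no $h$-contour with area $> \delta L^2$) and apply the isoperimetric bound $|\Lambda_\gamma| \le \sqrt{\delta L^2}\cdot|\gamma|/4$, which makes the area correction $O(\sqrt{\delta}\,|\gamma|)$ and hence dominated by $-\beta|\gamma|$ for small $\delta$; see the inequality chain around~\eqref{eq-25-pre}--\eqref{eq-29} and \eqref{first bound}. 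Your proposal has no analogous restriction, so the ``contour surgery'' you describe cannot close.

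In short: the conductance framework is a reasonable idea, but you are missing both the mass estimate for $S$ and --- critically --- the entropic-repulsion correction to the Peierls bound. Incorporating the latter essentially forces you to introduce an $A$-like area restriction, at which point you are close to the paper's conditional hitting-time argument.
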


Beyond the global lower bound on the mixing time established in Theorem~\ref{maintheorem}, one may ask how the dynamics actually approaches equilibrium in finer detail. In fact, starting from the flat configuration at height zero, the interface does not jump directly to its typical height $H(p,L)$, but rather moves through a cascade of intermediate metastable stages. At each stage the interface remains essentially flat, and upward transitions to the next level occur only through very rare large–deviation events, corresponding to the nucleation of a sufficiently large droplet at that height. This picture suggests that equilibration is not a smooth ascent, but rather a step–by–step mechanism in which droplets of increasing height appear successively and expand to cover the system.

Before presenting our second main result, let us emphasize the distinction 
with Theorem~\ref{maintheorem}. While the global bound there applies uniformly 
to all $p \in (1,\infty)$, the refined analysis below reveals a dependence on 
the specific value of $p$. In particular, the dynamics requires an 
exponentially long time to lift the interface from the floor up to 
intermediate levels $aH(p,L)$, and the precise exponent reflects the strength 
of fluctuations encoded by $p$. This behavior is naturally interpreted in terms 
of metastability: the growth proceeds layer by layer, and each additional step 
must overcome an entropic barrier whose cost scales with the system size.

To formalize this dependence, we introduce the notation
\begin{align}\label{d(p)}
    d(p):=\begin{cases}
p,& 1<p<2,\\[6pt]
2,& 2\le p<\infty.
\end{cases}
\end{align}

\begin{maintheorem}
\label{maintheorem2}
Fix $1<p<\infty$ and sufficiently large inverse temperature $\beta$. 
There exists $c:=c(\beta,p)>0$ such that the following holds. 
Consider the Glauber dynamics $(\eta(t))_{t\ge0}$ for the $p$-SOS model 
on $\Lambda=\{1,\dots,L\}^2$ with zero boundary condition, a floor at $0$, 
and a ceiling at $n_+$ with $n_+\geq\log L$. The dynamics starts from the 
all-zero configuration. For any $a \in (0,1)$, define
\[
\Omega_{p,a, L}=\Bigl\{\eta\in\Omega_{\Lambda_L,n_+}^+:\ 
\#\{x\in\Lambda:\eta_x\ge aH(p,L)\}\ \ge \frac{9}{10}|\Lambda|\Bigr\}.
\]
Define
\[
\tau_{p,a, L}=\min\{t\ge0:\eta(t)\in\Omega_{p,a, L}\}.
\]
Then $\lim_{L\rightarrow\infty}\pi^p_\Lambda(\Omega_{p,a, L})=1$, and \[
\lim_{L\rightarrow\infty}\mathbb{P}\!\left(\exp{\left(cL^{a^{d(p)}-o(1)}\right)}\le\tau_{p,a, L}\right)=1.
\]
\end{maintheorem}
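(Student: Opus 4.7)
The plan is to adapt the metastability and bottleneck strategy of \cite{caputo2014dynamics} for $p=1$ to the full range $1<p<\infty$. Writing $k:=aH(p,L)$, the argument rests on two main ingredients: a $p$-dependent Peierls contour estimate at level $k$, and a critical-droplet computation identifying the bottleneck scale $R^\star\asymp L^{a^{d(p)}}$. For the geometric reduction: if $\eta\in\Omega_{p,a,L}$ then the super-level set $\{x\in\Lambda_L:\eta_x\ge k\}$ has at least $9|\Lambda_L|/10$ sites, so planar isoperimetry together with the zero boundary condition forces a connected level-$k$ contour separating this set from $\partial\Lambda_L$ of length $\gtrsim L$. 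Consequently, along any trajectory $\eta(0)\equiv 0\to\eta(\tau_{p,a,L})$ there is a first time at which some connected level-$k$ contour attains the critical linear size $R^\star$, and it suffices to bound the equilibrium mass of the associated ``saddle'' set by $\exp(-cL^{a^{d(p)}})$.

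The heart of the proof, and the main obstacle, is the $p$-SOS Peierls contour estimate
\begin{equation*}
    \pi^{p}_{\Lambda_L}\bigl(\eta \text{ carries a level-}k \text{ contour of length } \ell\bigr)\;\le\;\exp\bigl(-c_0\,\tau_p(k)\,\ell\bigr),
\end{equation*}
together with control of the associated bulk entropic gain. For $p=1$ this is the SOS estimate used in \cite{caputo2014dynamics}: a level-$k$ contour is decomposed into $k$ stacked level-$i$ contours, each of whose probability factorizes at low temperature via a cluster expansion. For $p>1$ the Hamiltonian is nonlinear in $|\nabla\eta|$ and this stacking does not telescope; I plan to (i) introduce a scale-$k$ coarse-graining of the contour, (ii) solve the transverse wall variational problem $\min\sum|\nabla\eta|^p$ subject to boundary conditions $0$ and $k$ on each coarse block, and (iii) combine the resulting per-block cost with the equilibrium one-point large deviation estimates of \cite{lubetzky2016harmonic}. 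The variational problem exhibits a phase transition at $p=2$: for $1<p<2$ the minimizer is essentially concentrated at the interface, while for $p\ge 2$ it spreads over a strip of width $\Theta(k)$, with a logarithmic correction at $p=2$ matching the $p=2$ row of Table~\ref{tab:HpL}. This produces the dichotomy $d(p)=\min(p,2)$ of \eqref{d(p)}.

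Combining the contour estimate with the identity $H(p,L)^{d(p)}\asymp\beta^{-1}\log L$ and the droplet balance between surface cost and bulk entropic gain (following the $p=1$ blueprint of \cite{caputo2014dynamics}) identifies $R^\star\asymp L^{a^{d(p)}}$ and saddle mass at most $\exp(-cL^{a^{d(p)}})$. A standard reversible bottleneck bound then yields $\mathbb{E}^{\eta\equiv 0}[\tau_{p,a,L}]\ge e^{c'L^{a^{d(p)}}}$, which upgrades to the claimed high-probability statement by Markov's inequality combined with a restart argument. Finally, $\lim_{L\to\infty}\pi^{p}_{\Lambda_L}(\Omega_{p,a,L})=1$ follows directly from the concentration of the floor-conditioned interface around $H(p,L)$ established in \cite{lubetzky2016harmonic}. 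The most delicate point is the contour estimate near $p=2$, where the optimal wall profile transitions from sharp to diffuse and the logarithmic corrections in the $p=2$ rate must be tracked uniformly in $k$ to produce the clean exponent $a^{d(p)}$; a careful coarse-graining on the scale of the optimal wall width, combined with the large-deviation estimates of \cite{lubetzky2016harmonic}, should suffice.
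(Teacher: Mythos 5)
Your proposal captures the right intuition — critical droplet at level $k = aH(p,L)$, bottleneck scale $R^\star \asymp L^{a^{d(p)}}$, reversible bottleneck arguments — but the route you sketch is substantially harder than the paper's and has a key conceptual twist that would create serious trouble in execution.

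\textbf{Where the $d(p)$ dichotomy really lives.} You propose to derive a $k$-dependent Peierls estimate of the form $\exp(-c_0\,\tau_p(k)\,\ell)$ by coarse-graining the contour and solving the transverse wall variational problem $\min\sum|\nabla\eta|^p$. The paper does not do this, and the form you write is problematic. In the paper the object being controlled is a single \emph{step-one} $h$-contour at level $h=aH(p,L)-1$: inside $\ge h$, outside $\le h-1$. For such a contour the Peierls map that lowers the interior by one shows the surface tension is $\beta$ per unit length, \emph{independent of $h$ and of $p$} (this is exactly Lemma~\ref{peierlsimple} and Theorem~\ref{prop4.1}, which the paper reuses verbatim from the $p=1$ proof). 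The entire $p$-dependence enters through the \emph{bulk entropic-gain} term $|\Lambda_\gamma|\,\hat\pi^p(\eta_0\ge h)$, which is controlled by the one-point large deviation estimate $\hat\pi^p(\eta_0\ge aH(p,L)-1)\le L^{-a^{d(p)}+\varepsilon}$ (Lemma~\ref{one-point-lemma}, itself a direct consequence of the \cite{lubetzky2016harmonic} one-point LDP read off Table~\ref{tab:HpL}). The $d(p)=\min(p,2)$ dichotomy is thus a property of the equilibrium one-point tail, not of a contour surface tension. If instead by ``level-$k$ contour'' you mean a full wall where the height drops from $k$ to $0$, there is no clean $\tau_p(k)$: for $p>1$ the $p$-SOS energy of such a wall of width $w$ is $\sim\ell\,k^p/w^{p-1}$ and can be driven to zero by spreading, so the ``surface tension'' is geometry-dependent rather than a fixed constant times $\ell$. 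Sorting this out correctly is exactly the content of the \cite{lubetzky2016harmonic} one-point analysis, so your variational-wall coarse-graining program would essentially re-prove a cited result from scratch.

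\textbf{Structure of the hitting-time bound.} You plan to get $\E^{\eta\equiv 0}[\tau]\ge e^{c'L^{a^{d(p)}}}$ from a bottleneck bound and then ``upgrade by Markov's inequality combined with a restart argument.'' Markov's inequality only bounds upper tails, so by itself it cannot turn an expectation lower bound into a high-probability lower bound; the restart argument has to carry the full load. When you write out the restart argument, it collapses to exactly the paper's mechanism: start from the conditional quasi-equilibrium $\nu=\pi(\cdot\mid A_{p,a})$ where $A_{p,a}$ forbids $h$-contours of area $>\delta L^{2a^{d(p)}}$, observe that up to $\tau_{\partial A_{p,a}}$ the restricted chain is reversible with respect to $\nu$, and apply a union bound over the $\lfloor e^{cL^{a^{d(p)}}}\rfloor$ time steps: $\P_\nu(\tau_B < T)\le T[\nu(\partial A_{p,a})+\nu(B)]$. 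Monotonicity (not a restart) is what transfers the bound to the all-zero start. I'd recommend stating this directly rather than going through an expectation bound; it is both cleaner and is precisely how Theorem~\ref{maintheorem} is run, so Theorem~\ref{maintheorem2} drops out by replacing $h=H(p,L)-K$ with $h=aH(p,L)-1$ and the area cutoff $\delta L^2$ with $\delta L^{2a^{d(p)}}$, with the only genuinely new input being Lemma~\ref{one-point-lemma}.

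\textbf{Summary.} Your geometric reduction, the scale $R^\star\asymp L^{a^{d(p)}}$, and the use of the \cite{lubetzky2016harmonic} equilibrium estimates for $\pi(\Omega_{p,a,L})\to 1$ are all on target. But the central novelty you propose — a $k$-dependent Peierls surface tension $\tau_p(k)$ obtained from a wall variational problem — is misdirected: it is not needed, its form as stated is not correct, and it duplicates the one-point LDP rather than using it. Replacing it with the paper's constant-tension Peierls estimate plus Lemma~\ref{one-point-lemma} both simplifies the argument and makes the claimed exponent come out transparently.
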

\subsubsection{Metastable States Induced by Entropic Repulsion}
The metastable evolution of the interface under entropic repulsion can be naturally described in terms of the droplet picture. By a droplet we mean a finite region where the interface locally attains height $(h+1)$, embedded in a background at typical height $h$. In this sense, a droplet corresponds to a local phase $(h+1)$ appearing within the dominant phase $h$, and its energetic cost is determined by the length of the surrounding contour. Thus, the nucleation probability of a droplet is directly governed by the geometric properties of its contour. This perspective is consistent with the classical analysis of metastability in the two-dimensional Ising model through droplet nucleation \cite{schonmann1998wulff}, as well as with the SOS model dynamics studied in \cite{caputo2014dynamics}. In the present work, we extend this philosophy to the generalized $p$-SOS model. Inside the domain, the droplet interior typically remains close to the average height $aH(p,L)$, while near the boundary the interface gradually decreases due to the zero boundary condition. It is therefore more natural to regard the profile as smoothly interpolating between $aH(p,L)$ and $0$, rather than dropping abruptly. The energetic barrier is then concentrated along the droplet edge, where the interface must absorb a height difference of order $aH(p,L)$.

Since the $p$-SOS Hamiltonian contains gradient terms of the form $|\nabla \eta|^p$, the energy cost associated with such a boundary discrepancy scales as $(aH(p,L))^p$. For $p=1$, this cost is linear, making droplet nucleation comparatively easier, whereas for $p>1$ the cost grows much faster, significantly suppressing the transition probability. As a consequence, for the same fraction $a \in (0,1)$, the hitting time $\tau_{p,a,L}$ becomes substantially larger as $p$ increases. This explains, on an intuitive level, why Theorem~2 exhibits a $p$-dependent exponent $d(p)$ in the lower bound.

In summary, the metastable states induced by entropic repulsion arise from the combined effect of entropy gain and boundary rigidity. This mechanism, already established in the Ising model \cite{schonmann1998wulff} and in the classical SOS dynamics \cite{caputo2014dynamics}, persists in the $p$-SOS setting, where the rigidity effect is further amplified, leading to $p$-dependent time scales for mixing and equilibration.

\subsection{Challenges and Strategy} In two dimensions, both discrete and continuous spin systems exhibit rich interface phenomena. When the spin values are unbounded, the discrete gradient model is typified by the SOS model, while its continuous counterpart is captured by the Gaussian Free Field (GFF). Each of these models possesses a well-developed set of structural and probabilistic properties.  In what follows, we discuss the difficulties in generalizing from these well-understood frameworks and the conceptual obstacles encountered when transferring ideas from the SOS model and the GFF to the broader $p$-SOS context.

\subsubsection{Continuous vs. Discrete gradient models}
We examine the difference between discrete and continuous models from the perspective of gradient models. A prototypical example of a discrete gradient model is the integer-valued SOS model, while the Gaussian Free Field (GFF) represents the canonical case of a continuous gradient model.

In the SOS model, the height function takes integer values, and its structural properties make contour representations particularly effective. Analytical tools such as cylinders, level lines, and Peierls-type arguments are available, and these geometric approaches are well-suited to describe structural phenomena such as layering transitions, wetting, and entropic repulsion. In contrast, the GFF involves a real-valued height field and is well-analyzed through analytic tools such as Green functions, harmonic functions, and random walk representations. However, extending results from the SOS model to the generalized $p$-SOS model poses fundamental difficulties. For $p=1$, the contour representation is well-defined and powerful, but for $1<p<\infty$, the energy term becomes nonlinear, and such contour-based analyses are no longer directly applicable. Consequently, there exist significant technical barriers to generalizing the phenomena observed in the SOS model to the $p$-SOS setting.

\subsubsection{Technical Challenges in the $p$-SOS Model}
The $p$-SOS model is defined by a nonlinear energy term of the form
\[
\sum_{x \sim y} |\eta_x - \eta_y|^p,
\]
where $p > 1$. This nonlinearity leads to a fundamentally different mathematical structure compared to the classical SOS model ($p = 1$), introducing several technical challenges.\\
Many analytical techniques used in the SOS model—such as contour representation and Peierls-type arguments—rely on the subadditivity of the Hamiltonian and the ability to decompose height configurations into geometric objects. A key difficulty in extending Peierls-type arguments to the $p$-SOS model arises from the nonlinearity of the energy function. In the classical $p=1$ SOS model, many arguments rely on the ability to shift the heights inside a given contour by a fixed amount without affecting the energy. For example, for any non-negative integers $x \leq y$ and any $h > 0$, one has
\[
|y| - |x| = |y - x| = |y + h - (x + h)| = |y + h| - |x + h|,
\]
so the energy difference remains invariant under such shifts.\\
However, for $p > 1$, this invariance no longer holds due to the nonlinearity of the function $x \mapsto |x|^p$. A simple derivative argument shows that for $x \leq y$ and $h > 0$,
\[
|y + h|^p - |x + h|^p \geq p|x + h|^{p-1}|y - x|,
\]
which implies that the energy difference depends on both $x$ and the shift $h$. This lack of shift-invariance complicates the use of classical Peierls-type constructions, where contour-based mappings are frequently used to estimate large deviation probabilities.

As a consequence, while the one-point large deviation principle (LDP) for the SOS model can be established relatively easily (see Proposition 3.9 in \cite{caputo2014dynamics}), the extension to the $p$-SOS model requires more sophisticated analytical tools. This is evident in \cite{lubetzky2016harmonic}, where Theorems 3.1, 5.1, and 5.5 involve techniques from harmonic and $p$-harmonic analysis to overcome the nonlinear nature of the energy landscape in the $p$-SOS setting.
Consequently, extending some phenomena from SOS to the $p$-SOS model remains challenging. In particular, to estimate the lower bound on the mixing time under entropic repulsion, one needs a Peierls-type inequality adapted to this setting. We extended the inequality presented in Proposition 4.1 of \cite{lubetzky2016harmonic} to the general $p$-SOS model. This allowed us to establish that the mixing time under entropic repulsion exhibits an exponential lower bound.

\subsubsection{Strategy}
To prove the exponential lower bound on the mixing time, we analyze the time it takes for the dynamics to reach a “typical” region of configurations that dominates the equilibrium measure. The argument follows two main steps. In the first step, we identify a target set of configurations that, at equilibrium, carries almost the entire probability mass. Intuitively, these are the configurations where the interface has already risen close to its entropically favored height. Demonstrating that this set has near-unit probability ensures that reaching it is essentially equivalent to reaching equilibrium. In the second step, we design an initial distribution supported on atypical configurations, where the interface is still close to the floor and lacks any macroscopic upward fluctuations. Starting from such an initial state, the dynamics must overcome significant energetic and entropic barriers to enter the typical region. We then estimate the probability that the dynamics makes this transition within an exponential time window. This requires two ingredients: controlling the likelihood of leaving the atypical region too quickly, and bounding the probability of creating the critical upward fluctuations that drive the system toward equilibrium. Both steps are shown to have exponentially small probabilities. Finally, the analysis is first carried out in the simpler “floorless” setting, where technicalities are easier to manage, and then transferred back to the original model with a hard floor using standard comparison estimates. This strategy captures the core mechanism behind the slow mixing: the dynamics must wait for a rare fluctuation to initiate the transition to the typical height level, resulting in an exponentially large mixing time.

\subsubsection{Resolution}
We establish a general Peierls-type inequality for the $p$-SOS model in Theorem~\ref{prop4.1}, extending classical arguments from \cite{brandenberger1982decay}. This result generalizes the standard contour-based methods from the SOS case ($p = 1$) to the nonlinear setting $p > 1$. In particular, Theorem~\ref{prop4.1} includes a third term of the form $e^{-4\beta h}|\gamma|\log|\gamma|$, which, while slightly rougher than the bound presented in Proposition~4.1 of \cite{lubetzky2016harmonic}, has the advantage of being applicable uniformly across all $p$-SOS models. This term becomes negligible for sufficiently large $\beta$ and $h$, and in fact vanishes naturally throughout the proof without affecting the core argument. Thus, this extension provides a robust foundation for applying Peierls-type techniques in the $p$-SOS setting, playing a crucial role in the subsequent analysis of the exponential lower bound on the mixing time.\\
The second key component concerns the definition and role of the typical height $H(p, L)$ under entropic repulsion. In the SOS case $p=1$, Caputo et al.~\cite{caputo2014dynamics} 
showed that the typical height satisfies 
$H(1,L) = \tfrac{1}{4\beta}\log L + o(1)$. 
This was later sharpened in \cite{caputo2016scaling}, 
where it was proved that 
$H(1,L) = \lfloor (1/4\beta)\log L \rfloor$, 
thus providing a precise characterization without any $o(1)$ error term.\\
By contrast, for the $p$-SOS model, no such precise expression for $H(p, L)$ is available. Instead, \cite{lubetzky2016harmonic} provides only a rough estimate. In our approach, we adopt the probabilistic criterion used there, defining $H(p, L)$ as the smallest height such that
\[
\hat{\pi}_\Lambda^p(\eta_0 \geq H(p, L)) \geq \frac{5\beta}{L}.
\]
This definition allows us to construct a suitable initial configuration without requiring an explicit formula for $H(p, L)$, thereby enabling the derivation of a lower bound on the mixing time. Moreover, this strategy offers a practical and computationally efficient way to extend techniques developed for the SOS model to the more general $p$-SOS setting.

%\red{subsection Relatied Work 부분 추가하기 여기에서 fluctuation이랑 추가적으로 조금 더 찾아보기}

\subsection{Open Problems}
This work represents a first step toward understanding the behavior of Glauber dynamics under entropic repulsion in the $p$-SOS model, but several important questions remain open, since although we established a stretched-exponential lower bound on the mixing time of the form $\exp\!\bigl(cL^{1-o(1)}\bigr)$, it is not yet clear whether this bound is sharp; indeed, while in the classical SOS model \cite{lubetzky2016harmonic} proved a matching upper bound of order $\exp(O(L))$ by combining a refined canonical path argument with monotonicity, censoring techniques, and cluster expansion, for the $p$-SOS model it remains unclear whether a comparable upper bound can be derived or whether the slower stretched-exponential time scale suggested by our lower bound genuinely reflects the true behavior of the dynamics, a difficulty that stems from the lack of subadditivity and the increased complexity of the energy penalties associated with large gradients, which hinder a direct extension of the techniques available in the SOS case; in this context, although \cite{caputo2017entropic} developed a cluster expansion adapted to general $p$-SOS interfaces and provided tools that may prove useful for deriving upper bounds, whether these or other methods can lead to a sharper lower bound of order $\exp(cL)$ or instead to a matching upper bound confirming a stretched-exponential mixing time remains an open problem and an interesting direction for future research.

A further open question concerns the sharpness of the exponential scale. 
While both lower and upper bounds of the form $\exp(cL^{1-o(1)})$ or $\exp(cL^{a^{d(p)}-o(1)})$ 
have been established for the SOS model \cite{caputo2014dynamics} and related settings, 
the precise value of the constant $c$ has never been determined. 
It remains unclear whether $c$ can be explicitly identified, or at least characterized, 
and whether this constant depends monotonically on $p$ in the $p$-SOS model. 
Even understanding the qualitative behavior --- for instance, whether $c$ increases or decreases as $p$ varies --- 
would provide valuable insight into the dynamics under entropic repulsion.

Another natural direction is to extend the present results beyond the very low-temperature regime. While our analysis relies heavily on contour expansion techniques valid for large $\beta$, the behavior near the roughening transition $\beta_R$ remains largely mysterious. In particular, obtaining rigorous mixing time estimates for $p$-SOS dynamics in the regime $\beta$ just above $\beta_R$, where standard Peierls-type arguments break down, poses a significant challenge and requires new ideas beyond cluster expansion.

\subsection{Organization}
The remainder of the paper is organized as follows. In Section~\ref{sec:notation}, we introduce the $p$-SOS model, the Glauber dynamics, and the main probabilistic tools, including the contour representation, the DLR formalism, and the FKG inequality. Section~\ref{sec:equilibrium} presents the equilibrium properties of the model under entropic repulsion, such as the large deviation estimates and the typical height $H(p,L)$. In Section~\ref{sec:maintheorem}, we prove Theorems~\ref{maintheorem} and~\ref{maintheorem2}, establishing exponential lower bounds on the mixing time for the $p$-SOS model. Finally, Section~\ref{appendix} collects several technical computations that would otherwise interrupt the flow of the main text. In particular, detailed case-by-case analyses for $1<p<2$, $p=2$, and $2<p<\infty$ are deferred to the Appendix.
\begin{comment}
    \subsection{Acknowledgements} The author thanks to Professor Kyeongsik Nam for his guidance and  support throughout the development of this work.
\end{comment}

\section{Model and Technical Framework}\label{sec:notation}
In this section we introduce the technical framework and probabilistic tools that will be used throughout the paper. We first recall the contour representation of the $p$-SOS model, which provides a convenient geometric description of height configurations and plays a central role in our large deviation and Peierls-type arguments. We then review the Dobrushin--Lanford--Ruelle (DLR) formalism for Gibbs measures, which allows us to formulate both finite and infinite-volume settings in a consistent way. Finally, we present an application of the FKG inequality, which will be repeatedly used to compare probabilities of monotone events under different boundary conditions.

\subsection{Contour representation}

We now introduce the notion of contours for the $p$-SOS model. Contours encode the level lines of the height function and allow us to describe deviations of the interface from its typical height in terms of geometric objects. In particular, they are the key tool in our large deviation estimates and in the analysis of metastable 
transitions, where the nucleation of critical droplets can be expressed through the formation of suitable contours.

\begin{definition}[Contour]
We denote by $\mathbb{Z}^{2*}$ the dual lattice of $\mathbb{Z}^2$, whose sites are placed at the centers of the unit squares of the original (primal) lattice. A \emph{bond} in the dual lattice refers to a line segment connecting two neighboring sites in $\mathbb{Z}^{2*}$. Given two sites $x, y \in \mathbb{Z}^2$, we say that they are \emph{separated} by a dual bond $e$ if the Euclidean distance from both $x$ and $y$ to $e$ is exactly $1/2$. A pair of orthogonal dual bonds meeting at a vertex $x^* \in \mathbb{Z}^{2*}$ is called \emph{linked} if both bonds lie on the same side of one of the two diagonal lines of slope $\pm 1$ passing through $x^*$.
\begin{enumerate}
    \item The bonds are distinct except for $e_0 = e_n$, forming a closed loop.
    \item Each pair of consecutive bonds $e_i, e_{i+1}$ shares a common vertex in the dual lattice $\mathbb{Z}^{2*}$.
    \item If four bonds $e_i, e_{i+1}, e_j, e_{j+1}$ share a vertex $x^* \in \mathbb{Z}^{2*}$, then $(e_i, e_{i+1})$ and $(e_j, e_{j+1})$ must be \emph{linked}, i.e., lie on the same side of one of the two $45^\circ$ diagonals through $x^*$ in the dual lattice.
\end{enumerate}
We denote by $|\gamma|$ the length of $\gamma$, and by $\Lambda_\gamma$ the set of sites in $\mathbb{Z}^2$ that are strictly enclosed by $\gamma$. We write $|\Lambda_\gamma|$ for the number of sites contained in the interior of $\gamma$.\\
The boundary associated with $\gamma$ is defined as follows:
\begin{itemize}
    \item $\partial\gamma$: the set of sites in $\mathbb{Z}^2$ that are adjacent to $\gamma$, either via Euclidean distance $1/2$ in the primal lattice from a dual bond in $\gamma$, or via distance $1/\sqrt{2}$ from a vertex in $\mathbb{Z}^{2*}$ where two non-linked bonds of $\gamma$ meet.
    \item $\partial^+_\gamma := \partial\gamma \cap \Lambda_\gamma$ and $\partial^-_\gamma := \partial\gamma \setminus \partial^+_\gamma$.
    \item $\eta|_{\partial_\gamma^+} \ge j 
\quad \text{to mean that } \eta_x \ge j \;\; \text{for all } x \in \partial_\gamma^+$
    \item $\eta|_{\partial_\gamma^-} \le j-1 
\quad \text{to mean that } \eta_y \le j-1 \;\; \text{for all } y \in \partial_\gamma^-.$
\end{itemize}
\end{definition}

\begin{definition}[$h$-contour]\label{hcontour}
Given a configuration $\eta$, a contour $\gamma$ is said to be an $h$-contour if
\[
\eta|_{\partial_\gamma^+} \ge h \quad \text{ and }\quad \eta|_{\partial_\gamma^-} \le h-1.
\]
This represents a level drop across $\gamma$, from height at least $h$ inside to at most $h-1$ outside.

Let $\mathcal{C}_{\gamma,h}$ denote the event that $\gamma$ is an $h$-contour in configuration $\eta$, i.e.,
\[
\mathcal{C}_{\gamma,h} := \left\{ \eta : \eta|_{\partial_\gamma^+} \ge h \quad \text{ and }\quad \eta|_{\partial_\gamma^-} \le h-1\right\}.
\]

\end{definition}

\begin{remark}
For example, the elementary square contour of length $4$ surrounding a site $x \in \mathbb{Z}^2$ is an $h$-contour if $\eta(x) \ge h$ and all its neighbors are at height at most $h-1$. This situation can also occur for multiple levels: if the inside height is at least $\max(h,h')$ and the outside height is at most $\min(h,h')-1$, then the same geometric contour can simultaneously be an $h$-contour and an $h'$-contour for $h \ne h'$].
\end{remark}

\subsection{DLR Formalism and Gibbs Measures}\label{sub:dlr}

Recall $\Lambda \subset \mathbb{Z}^2$ be a finite subset and denote by $\Omega_\Lambda = \mathbb{Z}^\Lambda$ the space of integer-valued configurations on $\Lambda$. A \emph{Gibbs specification} is a family of probability kernels 
$\{\gamma_\Lambda^\phi\}_{\Lambda \subset \mathbb{Z}^2, \, \phi:\partial\Lambda\rightarrow\mathbb{Z}}$, 
where each $\gamma_\Lambda^\phi$ is defined via the Boltzmann–Gibbs distribution 
associated with the Hamiltonian of the model under consideration.

We consider the configuration space $\Omega = \mathbb{Z}^{\mathbb{Z}^2}$ equipped with the product $\sigma$-algebra $\mathcal{F}$ generated by local cylinder events. The set of probability measures on $(\Omega, \mathcal{F})$ is denoted by $\mathcal{M}_1(\Omega)$. A function $f: \Omega \rightarrow \mathbb{R}$ is said to be local if it depends only on finitely many coordinates, i.e., there exists $\Lambda \subset \mathbb{Z}^2$ such that $f(\eta) = f(\eta')$ whenever $\eta|_\Lambda = \eta'|_\Lambda$.

\begin{definition}[DLR Condition]
Let $\{\gamma_\Lambda^\phi\}$ be a Gibbs specification indexed by finite 
$\Lambda \subset \mathbb{Z}^2$ and boundary conditions $\phi:\partial\Lambda \rightarrow \mathbb{Z}$.  
A probability measure $\mu \in \mathcal{M}_1(\Omega)$ is said to satisfy the 
Dobrushin–Lanford–Ruelle (DLR) condition if, for every finite $\Lambda \subset \mathbb{Z}^2$ 
and every bounded measurable function $f:\Omega \rightarrow \mathbb{R}$,
\[
\mu(f) = \int \gamma_\Lambda^{\eta|_{\partial\Lambda}}(f)\, \mu(d\eta).
\]
\end{definition}
This relation characterizes the consistency between the global measure $\mu$ 
and the family of local Gibbs specifications $\gamma_\Lambda^\phi$. 
A probability measure that satisfies the DLR condition is referred to as an 
\emph{infinite-volume Gibbs measure}. Importantly, such measures encode the local 
equilibrium structure of the system and allow a rigorous formulation of phase 
coexistence and correlation decay.

In the context of our study, we denote by $\hat{\pi}^{p}$ the infinite-volume Gibbs 
measure for the $p$-SOS model without floor or ceiling, and by $\bar{\pi}^{p}$ the 
corresponding measure under a hard floor at height zero. Both measures satisfy the 
DLR condition with respect to the appropriate finite-volume specifications. As we 
shall see in later sections, this structure is essential in justifying decompositions 
of probability expressions and applying recursive arguments based on domain restriction. For further background on the DLR formalism, we refer the reader to the standard 
reference by Friedli and Velenik~\cite{friedli2018statistical}.

\subsection{FKG Inequality}\label{sub:fkg}

The FKG inequality (named after Fortuin, Kasteleyn, and Ginibre) is a fundamental correlation inequality in the theory of lattice spin systems, first established in \cite{fortuin1971correlation}. It asserts that, under suitable conditions, increasing events are positively correlated. A Gibbs measure is said to have \emph{attractive interactions} (or to be \emph{ferromagnetic}) if the coupling constants are nonnegative, so that neighboring variables tend to align. This monotonicity is precisely the condition under which the FKG inequality holds. 
Formally, a partial order can be defined on the configuration space $\Omega$ by pointwise comparison: for $\eta, \eta' \in \Omega$, we write $\eta \leq \eta'$ if and only if $\eta_x \leq \eta'_x$ for all $x \in \mathbb{Z}^2$. An event $E \subset \Omega$ is \emph{increasing} if $\eta \in E$ and $\eta \leq \eta'$ imply $\eta' \in E$. Similarly, a function $f:\Omega \rightarrow \mathbb{R}$ is \emph{nondecreasing} if $f(\eta) \leq f(\eta')$ whenever $\eta \leq \eta'$. For a Gibbs measure $\mu$ with attractive interactions, the FKG inequality states 
that for any pair of bounded nondecreasing local functions $f$ and $g$,
\[
\mu(fg) \;\geq\; \mu(f)\,\mu(g),
\]
where $(fg)(\eta) := f(\eta)\,g(\eta)$ denotes the pointwise product. 
In particular, if $E$ and $F$ are increasing events, then 
$f=\mathbf{1}_E$ and $g=\mathbf{1}_F$ yield 
$\mu(fg) = \mu(E \cap F) \ge \mu(E)\mu(F)$.
In particular, if $E$ and $F$ are increasing events, their indicator functions $\mathbf{1}_E$ and $\mathbf{1}_F$ are nondecreasing, implying that the occurrence of one increases the likelihood of the other. A key feature of the FKG inequality is that it applies under arbitrary boundary conditions and general external fields, provided the interaction is attractive. Equivalently, in the setting of finite-volume Gibbs measures with boundary condition $\phi$ on a finite domain $\Lambda \subset \mathbb{Z}^2$, one has
\[
\mu^\phi_\Lambda(fg) \;\geq\; \mu^\phi_\Lambda(f)\,\mu^\phi_\Lambda(g)
\]
for all bounded nondecreasing local functions $f$ and $g$. In our context, the FKG inequality will be used to compare probabilities under different boundary conditions and to deduce product-type lower bounds for certain increasing events. This property is particularly useful in the analysis of contour events, where the interior of a contour is influenced by its boundary values.

\subsection{Exponential Decay of Correlations}\label{exponentiallydecay}

For sufficiently large $\beta$, two-dimensional interface models such as the SOS, DG, and more generally the $p$-SOS model, exhibit exponential decay of truncated correlations between local observables. 
This was rigorously established in the classical work of 
Brandenberger and Wayne~\cite{brandenberger1982decay} via a convergent 
low-temperature cluster expansion. 

Formally, let $f(\eta_{\Lambda_1})$ and $g(\eta_{\Lambda_2})$ 
be bounded local functions depending only on the disjoint finite sets 
$\Lambda_1,\Lambda_2 \subset \mathbb{Z}^2$. 
Then there exist constants $C_1,C_2>0$ (depending only on $\beta$ and $p$) such that
\[
\big|\,\hat{\pi}^p(f g)
       - \hat{\pi}^p(f)\,\hat{\pi}^p(g)\,\big|
   \;\le\; C_1 \exp\!\big(-C_2\,\mathrm{dist}(\Lambda_1,\Lambda_2)\big),
\]
where $\hat{\pi}^p$ denotes the infinite-volume Gibbs measure of the $p$-SOS model.
Equivalently, truncated correlations of local observables decay exponentially 
in the distance between their supports.

\section{Equilibrium Properties under Entropic Repulsion}\label{sec:equilibrium}

The following result parallels Theorem 3.1 of \cite{caputo2014dynamics} for the SOS model ($p=1$) and Theorem 2 and Thm 4 of \cite{lubetzky2016harmonic} for the Discrete Gaussian model ($1<p<\infty$). In the presence of entropic repulsion, nearly all heights in $\Lambda_L$ concentrate around the typical height $H(p,L)$. Relying on the mechanisms identified in these works, we formulate the corresponding statement for the $p$-SOS model.
\begin{theorem}
\label{thm-h-event}
Fix $p \ge 1$. For any sufficiently large $\beta$, let $\varepsilon=\varepsilon(\beta)>0$ be such that $\varepsilon(\beta)\rightarrow0$ as $\beta\rightarrow\infty$. Then there exists $L_0=L_0(\beta,p)$ such that, for all $L\ge L_0$ and any box $\Lambda\subset\mathbb{Z}^2$ of side length $L$,
\[
\bar{\pi}^p_\Lambda\!\left(\,\#\{v\in\Lambda:\eta_v\ge H(p,L)-1\}\,\ge\,(1-\varepsilon(\beta))|\Lambda|\,\right)
\longrightarrow 1
\quad\text{as }L\rightarrow\infty.
\]
\end{theorem}

For the mixing-time analysis, we shall work with a fixed ceiling height $n_+$. 
However, the choice of ceiling has little effect on the stationary measure: the difference between the ceiling-constrained measure $\pi^p_\Lambda$ (floor at $0$, ceiling at $n_+$) and the floor-only measure $\bar{\pi}^{p, 0}_\Lambda$ (no ceiling) is asymptotically negligible. 
This allows us to transfer estimates from $\bar{\pi}^{p, 0}_\Lambda$ to $\pi^{p, 0}_\Lambda$ at negligible cost. 
The next statement is the $p$-SOS analogue of Lemma~3.2 in~\cite{caputo2014dynamics}.

% Lemma : comparisonwithnoandyes

\begin{proposition}
\label{comparisonwithnoandyes}
Let $p \ge 1$ and $\beta \ge 1$ be sufficiently large. Then, for any event 
$A \subset \{0, \dots, n_+\}^\Lambda$, there exists a constant $c_1>0$ 
(as introduced in Proposition~\ref{LDP_domain}) such that
\[
\bar{\pi}_\Lambda^{p,0}(A)
\le 
\pi_\Lambda^{p,0}(A)
\le 
\left[
1 + 2\left(c_1L \log L\, e^{-2\beta n_+}+ L^2\, \hat{\pi}^{p}(\eta_0 \ge n_+/2)
+ L^{-4\beta + 2}\right)\right]
\bar{\pi}_\Lambda^{p,0}(A).
\]
\end{proposition}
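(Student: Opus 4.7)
The plan is to reduce both inequalities to a single identity for the ratio $\pi_\Lambda^{p,0}(A)/\bar\pi_\Lambda^{p,0}(A)$ and then to quantify the cost of imposing the ceiling at $n_+$. Since $A\subset\{0,\ldots,n_+\}^\Lambda$, we have $A=A\cap\{\max_x\eta_x\le n_+\}$, and $\pi_\Lambda^{p,0}$ is by definition $\bar\pi_\Lambda^{p,0}(\,\cdot\mid \max_x\eta_x\le n_+)$. Hence
\[
\pi_\Lambda^{p,0}(A)\;=\;\frac{\bar\pi_\Lambda^{p,0}(A)}{1-q},\qquad q\;:=\;\bar\pi_\Lambda^{p,0}\!\left(\exists\,x\in\Lambda:\,\eta_x>n_+\right).
\]
The denominator being $\le 1$ gives the lower bound $\bar\pi_\Lambda^{p,0}(A)\le\pi_\Lambda^{p,0}(A)$ at once. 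For the upper bound the task reduces to proving $q\le R$, where $R$ is the sum of the three terms inside the square brackets, and then invoking $1/(1-q)\le 1+2q\le 1+2R$; this last step requires $R<1/2$, which is automatic for $\beta$ and $L$ large since $n_+\ge\log L$.

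To bound $q$, I would apply a union bound and split the sites by proximity to $\partial\Lambda$. Set $\Lambda_\partial:=\{x\in\Lambda:\mathrm{dist}(x,\partial\Lambda)\le K\log L\}$, so that $|\Lambda_\partial|=O(L\log L)$, and let $\Lambda_{\mathrm{int}}:=\Lambda\setminus\Lambda_\partial$. For a boundary-adjacent site $x\in\Lambda_\partial$, any configuration with $\eta_x>n_+$ carries an $(n_++1)$-contour separating $x$ from the zero boundary at distance $\le K\log L$; the Peierls-type estimate of Proposition~\ref{LDP_domain}, tailored to the $p$-SOS Hamiltonian, then gives $\bar\pi_\Lambda^{p,0}(\eta_x>n_+)\le c_1 e^{-2\beta n_+}$, and summing over $\Lambda_\partial$ yields the first term $c_1 L\log L\,e^{-2\beta n_+}$. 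For an interior site $x\in\Lambda_{\mathrm{int}}$ I would further split $\{\eta_x>n_+\}$ according to whether some contour of diameter $\ge K\log L$ encircles a neighborhood of $x$. The short-range alternative can be coupled, via the exponential decay of truncated correlations recalled in Section~\ref{sec:notation}, to the translation-invariant $\hat\pi^p$-event $\{\eta_0\ge n_+/2\}$, producing the middle term $L^2\,\hat\pi^p(\eta_0\ge n_+/2)$; the long-contour alternative is controlled by a global Peierls enumeration based on Theorem~\ref{prop4.1} and contributes at most $L^{-4\beta+2}$, giving the third term.

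The hard part will be the interior comparison. A direct FKG argument produces the wrong inequality $\bar\pi_\Lambda^{p,0}(\eta_x\ge n_+)\ge\hat\pi_\Lambda^{p,0}(\eta_x\ge n_+)$, so the floor conditioning cannot be removed by a monotone coupling and one must instead perform a contour-based decomposition. This is exactly the difficulty flagged in the Challenges subsection: for $p>1$ the failure of shift-invariance of $x\mapsto|x|^p$ breaks the classical ``roll-back'' identity used for the SOS model in \cite{caputo2014dynamics}. Theorem~\ref{prop4.1} is the tool that resolves this: its additional $e^{-4\beta h}|\gamma|\log|\gamma|$ term absorbs the $p$-dependent correction generated when heights are shifted across a contour, and it is negligible compared to the dominant $e^{-2\beta h}$ factor once $h\ge n_+/2\ge(\log L)/2$. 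With this estimate in hand the geometric decomposition of the interior event goes through, the three pieces combine to $q\le R$, and $1/(1-q)\le 1+2R$ closes the upper bound.
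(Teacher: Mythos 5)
Your proposal follows the same spine as the paper's argument: the identity $\pi_\Lambda^{p,0}(A)=\bar\pi_\Lambda^{p,0}(A)/\bar\pi_\Lambda^{p,0}(\eta\le n_+)$ immediately gives the lower bound, and the upper bound reduces, via a union bound over sites, to estimating $\sum_v\bar\pi_\Lambda^{p,0}(\eta_v>n_+)$ by splitting $\Lambda$ into an $O(L\log L)$-site boundary layer and an interior. The boundary contribution is handled by a Peierls estimate exactly as you describe, yielding $c_1L\log L\,e^{-2\beta n_+}$, and the final step $1/(1-q)\le 1+2q$ is the same. One small difference: for the interior term the paper does not perform a separate short-range/long-contour decomposition; it simply cites the bulk branch of Proposition~\ref{LDP_domain}, in which the term $L^{-4\beta+2}$ originates from the Brandenberger--Wayne correlation-decay error $|\Lambda_L|^{-2\beta}$ summed over $L^2$ interior sites, rather than from a fresh Peierls enumeration over long contours. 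Your version of the interior estimate is more explicit and would also work, but it essentially re-derives what Proposition~\ref{LDP_domain} already packages.

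Two smaller remarks. First, your observation that FKG gives $\bar\pi_\Lambda^{p,0}(\eta_v\ge h)\ge\hat\pi_\Lambda^{p,0}(\eta_v\ge h)$ (so the floor cannot be removed by a monotone coupling) is correct and worth stressing: the paper applies Proposition~\ref{LDP_domain}, which is stated for the floorless $\hat\pi^{p,0}_\Lambda$, directly to $\bar\pi^{p,0}_\Lambda$ without comment, and your contour-based fix is the right way to close that small gap. Second, a conceptual slip: the extra term $c_1e^{-4\beta h}|\gamma|\log|\gamma|$ in Theorem~\ref{prop4.1} does not compensate for the failure of shift-invariance of $x\mapsto|x|^p$. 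The elementary Peierls bound of Lemma~\ref{peierlsimple} already handles the $p$-nonlinearity by convexity ($m^p-(m-1)^p\ge 1$). That extra term instead accounts for the probability that the downward-shifted configuration $T_\gamma\eta$ violates the floor, i.e.\ it comes from the factor $\prod_{x\in\Lambda_\gamma}\bigl(1-\hat\pi^{p,0}_{\Lambda_\gamma}(\eta_x\ge h)\bigr)$ in equation~\eqref{eq:pos-prob-factor1}. The conclusion of your argument is unaffected, but the attribution should be corrected.
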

This inequality compares the full Gibbs measure $\pi_\Lambda^{p,0}$, defined with both floor and ceiling constraints, to the unconstrained Gibbs measure $\hat{\pi}_\Lambda^{p,0}$. The error terms arise from bounding the effect of the ceiling and the finite domain, and are small provided $\beta$ and $n_+$ are large. The contour length term $c_1L \log L$ corresponds to the outer boundary of the square domain $\Lambda$.

The following result provides a Peierls-type upper bound on the probability of the occurrence of a given $h$-contour in the $p$-SOS model with a floor at zero. For the SOS case ($p=1$), an analogous statement was established in Proposition~3.6 of~\cite{caputo2014dynamics}, while for the Discrete Gaussian ($p=2$), the corresponding estimate appears as Proposition~4.1 in~\cite{lubetzky2016harmonic}. Here, we extend these bounds to the general $p$-SOS setting, combining Peierls estimates with the entropic repulsion effect to obtain a bound valid for all $p\geq 1$.
% Prop 4.1 사용
\begin{theorem}\label{prop4.1}
Consider the $p$-SOS model($p\geq1$) in a finite connected subset $\Lambda\subset\mathbb{Z}^2$ with a floor at $0$ and zero boundary condition on $\partial\Lambda$. 
Let $\gamma\subset\Lambda$ be a fixed closed dual circuit. For sufficiently large $\beta$, there exists a constant $c_1 > 0$ such that
\[
\bar{\pi}^{p, 0}_\Lambda(\mathcal{C}_{\gamma,h})
\;\le\;
\exp\!\left(
  -\,\beta\,|\gamma|
  \;+\;\hat{\pi}^p(\eta_0\ge h)|\Lambda_\gamma|
  \;+c_1\;e^{-4\beta h}\,|\gamma|\,\log|\gamma|
\right).
\]
\end{theorem}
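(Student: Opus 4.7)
\emph{Proof proposal.} The plan is to establish this Peierls-type bound by a height-surgery argument generalizing the $p=1$ argument of Proposition 3.6 in \cite{caputo2014dynamics} and the $p=2$ version in Proposition 4.1 of \cite{lubetzky2016harmonic}. Given $\eta \in \mathcal{C}_{\gamma,h}$, I would introduce the \emph{layer-stripping map}
\[
T(\eta)_x \;=\; \eta_x \;-\; \mathbf{1}\{\, x \in \Lambda_\gamma,\; \eta_x \ge h\,\},
\]
which shifts every site inside $\Lambda_\gamma$ at height $\ge h$ down by one. The non-negativity constraint is preserved because $\eta_x \ge h \ge 1$ wherever the indicator is nonzero, and $T(\eta)$ no longer has $\gamma$ as an $h$-contour since its inner boundary has been pushed to level $\le h-1$. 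The bound then follows by writing
\[
\bar{\pi}^{p,0}_\Lambda(\mathcal{C}_{\gamma,h}) \;=\; \sum_{\tilde\eta}\,\sum_{\eta \in T^{-1}(\tilde\eta)\cap\mathcal{C}_{\gamma,h}}\!\!\bar{\pi}^{p,0}_\Lambda(\eta),
\]
and controlling separately the ratio $\bar{\pi}^{p,0}_\Lambda(\eta)/\bar{\pi}^{p,0}_\Lambda(T(\eta)) = \exp(-\beta[H^{p,0}(\eta)-H^{p,0}(T(\eta))])$ and the multiplicity of $T$.

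The first step is to decompose the Hamiltonian change by bond type. There are exactly $|\gamma|$ bonds crossing $\gamma$; each has an inner endpoint $\eta_x \ge h$ and an outer endpoint $\eta_y \le h-1$, so the jump $\eta_x - \eta_y \ge 1$ decreases by one after stripping. Strict monotonicity of $k\mapsto k^p$ on the positive integers produces an energy decrease of at least $1$ per such bond, hence a total Peierls gain of at least $\beta|\gamma|$ and the leading factor $e^{-\beta|\gamma|}$. Bonds with both endpoints at level $\ge h$ or both $< h$ inside $\Lambda_\gamma$ are unchanged. The remaining internal bonds straddle the level-set boundary $\partial L_h$, where $L_h := \{x \in \Lambda_\gamma : \eta_x \ge h\}$; these carry the only genuinely $p$-dependent contribution. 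For these I would use the convexity estimate $|a|^p - |a-1|^p \le p|a|^{p-1}$ to bound the per-bond loss, and then control the total length of $\partial L_h$ probabilistically. Observing that $\partial L_h$ is itself a disjoint union of internal $h$-contours, I would invoke the uniform (in $p$) tail bound $\hat{\pi}^p(\eta_0 \ge h) \le e^{-4\beta h}$ (valid for $p \ge 1$ because four bonds of jump at least $h$ each cost $\beta h^p \ge \beta h$) together with a standard enumeration of contours of length $\ell$; summing over $\ell \le |\gamma|$ yields the correction $c_1 e^{-4\beta h}|\gamma|\log|\gamma|$.

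Finally I would account for the non-injectivity of $T$. A preimage of $\tilde\eta$ is determined by specifying, for each $x \in \Lambda_\gamma$ with $\tilde\eta_x \ge h-1$, whether $\eta_x = \tilde\eta_x$ or $\eta_x = \tilde\eta_x + 1$, so $|T^{-1}(\tilde\eta)| \le 2^{|S(\tilde\eta)|}$ with $S(\tilde\eta) = \{x \in \Lambda_\gamma : \tilde\eta_x \ge h-1\}$. Combining this with FKG and the first-moment bound $\mathbb{E}_{\bar{\pi}^{p,0}_\Lambda}[|S|] \le |\Lambda_\gamma|\,\hat{\pi}^p(\eta_0 \ge h)$ (absorbing the shift $h-1 \to h$ in the constant $c_1$) produces the middle factor $\exp(\hat{\pi}^p(\eta_0\ge h)|\Lambda_\gamma|)$. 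Multiplying the three factors closes the argument.

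The hard part will be the middle paragraph, and specifically the bookkeeping along $\partial L_h$. For $p=1$ the energy $|a-b|$ is shift-invariant, so \emph{every} internal bond of the stripping map has unchanged energy and the clean bound $e^{-\beta|\gamma|}$ drops out with no correction; for $p>1$ this shift-invariance fails, and one must show that the polynomial penalty $p|a|^{p-1}$ along the level-set boundary is overwhelmed by the exponential suppression $e^{-4\beta h}$ on the length of that boundary, uniformly in $p \in (1,\infty)$. It is precisely this balance that forces the extra $|\gamma|\log|\gamma|$ entropy factor absent from the $p\in\{1,2\}$ versions, and that requires the delicate combination of the convexity inequality, a recursive tail estimate for $h$-contours, and a union bound over the possible internal configurations of $L_h$.
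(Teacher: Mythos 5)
Your surgery map differs from the paper's: you use the ``layer-stripping'' map $T(\eta)_x=\eta_x-\mathbf{1}\{x\in\Lambda_\gamma,\ \eta_x\ge h\}$, which preserves the floor automatically but is non-injective, whereas the paper applies the plain subtraction $T_\gamma\eta=\eta-\mathbf{1}_{\Lambda_\gamma}$ from \eqref{mapping} \emph{only on the set} $\{\eta\in\mathcal{C}_{\gamma,h}:\eta|_{\Lambda_\gamma}>0\}$, where it is injective and floor-preserving, and then controls $\bar{\pi}^{p,0}_\Lambda(\eta|_{\Lambda_\gamma}>0\mid\mathcal{C}_{\gamma,h})$ separately by FKG and the one-point bounds of Proposition~\ref{LDP_domain}. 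These are genuinely different decompositions, and two steps in yours do not go through.

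First, your middle paragraph treats bonds straddling the level-set boundary $\partial L_h$ as a $p$-dependent ``loss'' to be paid. In fact no loss occurs: for a bond $\{x,y\}\subset\Lambda_\gamma$ with $\eta_x\ge h>\eta_y$, your map lowers $\eta_x$ by $1$ and leaves $\eta_y$ fixed, so $|\eta_x-\eta_y|$ strictly drops and $|\cdot|^p$ can only decrease for $p\ge 1$. The convexity bound $|a|^p-|a-1|^p\le p|a|^{p-1}$ goes in the wrong direction for the purpose you intend, and the contour enumeration over $\partial L_h$ is not what generates the $c_1 e^{-4\beta h}|\gamma|\log|\gamma|$ term. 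In the paper that term comes entirely from the boundary layer $\{x:\mathrm{dist}(x,\gamma)\le\log|\gamma|\}$, where Proposition~\ref{LDP_domain} only yields the weaker one-point bound $c_1e^{-4\beta h}$ in place of $\hat{\pi}^p(\eta_0\ge h)$.

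Second, and more seriously, the multiplicity step is a genuine gap. Your argument reduces to bounding $\mathbb{E}_{\bar{\pi}^{p,0}_\Lambda}\!\left[2^{|S|}\right]$ with $S=\{x\in\Lambda_\gamma:\tilde\eta_x\ge h-1\}$, and you propose to pass from the first-moment bound $\mathbb{E}[|S|]\le|\Lambda_\gamma|\,\hat{\pi}^p(\eta_0\ge h)$ to this exponential moment using FKG. Both available inequalities point the wrong way: Jensen gives $\mathbb{E}\!\left[2^{|S|}\right]\ge 2^{\mathbb{E}[|S|]}$, and FKG applied to the increasing functions $2^{\mathbf{1}\{\eta_x\ge h-1\}}$ gives $\mathbb{E}\!\left[\prod_x 2^{\mathbf{1}\{\eta_x\ge h-1\}}\right]\ge\prod_x\mathbb{E}\!\left[2^{\mathbf{1}\{\eta_x\ge h-1\}}\right]$, a lower bound, not the upper bound you need. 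Controlling the exponential moment of $|S|$ would require a negative-dependence or BK-type input that is not supplied. The paper sidesteps the issue entirely: since its $T_\gamma$ is injective on $\mathcal{C}_{\gamma,h}\cap\{\eta|_{\Lambda_\gamma}>0\}$, no MGF is ever needed, and FKG is used in the \emph{correct} direction to lower-bound the probability of the increasing event $\{\eta|_{\Lambda_\gamma}>-h\}$ by a product of one-point probabilities, which exponentiates to exactly the two correction terms in the statement.
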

To prove Theorem~\ref{prop4.1}, we first establish a Peierls-type inequality for the $p$-SOS model without ceiling constraint. While the inequality is structurally similar to the one known for the standard SOS model, the nonlinear nature of the $p$-norm interaction requires certain technical modifications in the argument, which we present for completeness.

The next lemma provides the basic Peierls bound in the absence of both floor and ceiling, for the $p$-SOS model with arbitrary boundary conditions. This estimate will serve as a fundamental ingredient in the proofs of the main results stated above. For the SOS case ($p=1$), the same bound appears in Lemma~3.7 of~\cite{caputo2014dynamics}, and here we extend it to the general $p$-SOS setting ($p \ge 1$) without any change in the form of the estimate.
\begin{lemma}\label{peierlsimple}
Let $\Lambda\subset\mathbb{Z}^2$ be finite and let $p\ge 1$. 
For any contour $\gamma\subset\Lambda$ and any $h\in\mathbb{Z}$,
\begin{equation}
\label{lemma3.7for}
    \hat{\pi}^{p,0}_\Lambda\!\left(\mathcal{C}_{\gamma,h}\right) \;\le\;
    \exp\!\bigl(-\beta\,|\gamma| \bigr).
\end{equation}
Moreover, if $h'<h$ and $\gamma,\gamma'$ are contours with
$\Lambda_\gamma\subseteq\Lambda_{\gamma'}$, then
\begin{equation}
\label{eq2:lemma3.7}
 \hat{\pi}^{p,0}_\Lambda\!\left(
C_{\gamma,h} \,\middle|\, C_{\gamma',h'} \right) \;\le\; \exp\!\bigl(-\beta\,|\gamma| \bigr).
\end{equation}
\end{lemma}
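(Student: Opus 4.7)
The plan is to reproduce the classical Peierls energy-decrement argument for SOS, replacing the identity $|t|-|t-1|=1$ (for integer $t\ge 1$) by its $p$-analogue. Define the height-shift map $T \colon \Omega_\Lambda \to \Omega_\Lambda$ by $T(\eta)_x = \eta_x - 1$ for $x \in \Lambda_\gamma$ and $T(\eta)_x = \eta_x$ otherwise. The map is clearly injective, and the only bonds whose contribution to $H^{p,0}_\Lambda$ changes are those crossing $\gamma$: bonds with both endpoints in $\Lambda_\gamma$ (or both outside) see the same height gap before and after. The key one-line inequality, which handles the nonlinearity, is
\[
t^p - (t-1)^p \;\ge\; 1 \quad \text{for every integer } t \ge 1 \text{ and every } p \ge 1,
\]
proved by the mean value bound $p\xi^{p-1}\ge 1$ on $[t-1,t]$ (with the case $t=1$ trivial).

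For \eqref{lemma3.7for}, on each bond crossing $\gamma$ the endpoint $x \in \partial^+_\gamma$ satisfies $\eta_x \ge h$ and the other endpoint $y \in \partial^-_\gamma$ satisfies $\eta_y \le h-1$, hence $\eta_x - \eta_y \ge 1$; the displayed convexity bound then yields a per-bond decrement of at least $1$, totalling at least $|\gamma|$. The standard Peierls substitution
\[
Z_\Lambda^{p,0}\, \hat{\pi}_\Lambda^{p,0}(\mathcal{C}_{\gamma,h})
\;=\; \sum_{\eta \in \mathcal{C}_{\gamma,h}} e^{-\beta H^{p,0}_\Lambda(\eta)}
\;\le\; e^{-\beta|\gamma|} \sum_{\eta \in \mathcal{C}_{\gamma,h}} e^{-\beta H^{p,0}_\Lambda(T(\eta))}
\;\le\; e^{-\beta|\gamma|}\, Z_\Lambda^{p,0}
\]
closes the argument, where injectivity of $T$ is used in the last inequality.

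For the conditional bound \eqref{eq2:lemma3.7}, I would apply the same map $T$ and establish that it restricts to an injection $\mathcal{C}_{\gamma,h} \cap \mathcal{C}_{\gamma',h'} \hookrightarrow \mathcal{C}_{\gamma',h'}$. The constraint $\eta \le h'-1$ on $\partial^-_{\gamma'}$ is automatic, since $\partial^-_{\gamma'} \subset \Lambda \setminus \Lambda_{\gamma'} \subset \Lambda \setminus \Lambda_\gamma$ is unaffected by $T$, and on $\partial^+_{\gamma'} \setminus \Lambda_\gamma$ the heights are likewise unchanged. The main technical obstacle, and the only place where the hypothesis $h' < h$ enters, is to verify that on $\partial^+_{\gamma'} \cap \Lambda_\gamma$ the pre-shift value satisfies $\eta \ge h'+1$: such a site is simultaneously in the interior of $\gamma$ and adjacent to $\gamma'$, and the nesting $\Lambda_\gamma \subseteq \Lambda_{\gamma'}$ combined with the contour definition forces it to sit in $\partial^+_\gamma$, so $\eta \ge h \ge h'+1$ and the shifted value $\eta - 1 \ge h'$ preserves the $h'$-contour condition for $\gamma'$. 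This geometric reduction requires a short case analysis on the relative contour geometry (coinciding contours, shared dual edges, or tangencies at dual vertices), but once it is in place the conditional Peierls sum yields $\hat{\pi}_\Lambda^{p,0}(\mathcal{C}_{\gamma,h} \cap \mathcal{C}_{\gamma',h'}) \le e^{-\beta|\gamma|}\, \hat{\pi}_\Lambda^{p,0}(\mathcal{C}_{\gamma',h'})$, and dividing delivers \eqref{eq2:lemma3.7}.
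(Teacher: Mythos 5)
Your proof is correct and follows essentially the same Peierls shift argument as the paper: the map $T$ lowering heights by one inside $\Lambda_\gamma$, the per-bond energy decrement $t^p-(t-1)^p\ge 1$ from convexity of $|t|^p$, and injectivity of $T$ on the relevant event. Your treatment of the conditional bound is in fact slightly more careful than the paper's, since you explicitly verify that $\partial^+_{\gamma'}\cap\Lambda_\gamma\subseteq\partial^+_\gamma$ (so that $\eta\ge h\ge h'+1$ there before the shift), a point the paper passes over by asserting $\Lambda_\gamma$ lies ``strictly inside'' $\gamma'$.
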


\begin{proof}[Proof of Lemma~\ref{peierlsimple}]\label{peierlsimplepf}
The argument is identical to Lemma 3.7 of \cite{caputo2014dynamics}. We include it only to record the simple convexity observation needed for the extension to general $p\ge1$. We define the shift map $T_{\gamma}:\mathbb{Z}^\Lambda \rightarrow \mathbb{Z}^\Lambda$ by
\begin{equation}\label{mapping}
T_{\gamma}(\eta)_v =
\begin{cases}
\eta_v - 1, & \text{if } v \in \Lambda_\gamma, \\
\eta_v, & \text{otherwise}.
\end{cases}
\end{equation}
For any configuration $\eta \in \mathcal{C}_{\gamma,h}$, and any edge $e=\{x,y\}$ crossing $\gamma$ (oriented such that $x \in \Lambda_\gamma$ and $y \notin \Lambda_\gamma$), the $h$-contour condition ensures $\Delta_e := \eta_x - \eta_y \ge 1$. The key difference from the SOS model lies in the hamiltonian estimate for $p \ge 1$. Since $f(t)=|t|^p$ is convex, the gradient satisfies:$$|(T\eta)_x - (T\eta)_y|^p = (\Delta_e - 1)^p \le \Delta_e^p - 1 = |\eta_x - \eta_y|^p - 1.$$For edges $e \notin E_\gamma$, the gradient remains unchanged. Summing over all edges and noting $|E_\gamma| = |\gamma|$, we obtain the Hamiltonian bound:
\begin{equation}\label{eq:energy-drop}
H_p(T\eta) \le H_p(\eta) - |E_\gamma|
= H_p(\eta) - |\gamma|.
\end{equation}
This leads to $\hat{\pi}^{p,0}_\Lambda(T\eta) \ge e^{\beta|\gamma|}\hat{\pi}^{p,0}_\Lambda(\eta)$. The remainder of the proof, including the extension to the conditional estimate \eqref{eq2:lemma3.7}, proceeds via the same steps as those in \cite{caputo2014dynamics}.
\end{proof}
\begin{comment}
    Unlike the classical SOS model, where the linear structure of the energy 
allows rather explicit control of marginal distributions, the $p$-SOS model 
is more sensitive to boundary effects due to the nonlinear dependence of the 
energy on gradients. In particular, computing one-point marginals near the 
boundary becomes technically delicate. Nevertheless, this issue has been addressed in \cite{chen2025limiting}, where Lemma 2.8 and Lemma 7.10 establish the necessary estimates by combining the aforementioned tools with the exponential decay properties. We shall utilize these established bounds, as the associated error terms are exponentially small in the system size and thus negligible for our purposes. This provides a sufficient basis to apply the large deviation principle (LDP) for the $p$-SOS model near the boundary of a finite domain.
\end{comment}

Proposition \ref{LDP_domain} plays a pivotal role in the proofs of Theorem \ref{maintheorem} and Theorem \ref{maintheorem2}, as it provides an upper bound on the one-point large deviation probability for any finite domain $\Lambda$ and any parameter $1 < p < \infty$. Unlike sharp estimates specifically tailored for the infinite-volume Gibbs measure, this result offers a robust, albeit less sharp, bound for the event that the interface height at an arbitrary site $x \in \Lambda$ exceeds $h$. The historical context of such estimates varies significantly by the value of $p$. In the classical SOS model ($p=1$), both upper and lower bounds were established with considerable precision for arbitrary $x \in \Lambda$ in Proposition 3.9 of \cite{caputo2014dynamics}. For the Discrete Gaussian model ($p=2$), similar considerations were mentioned in the proof of Proposition 4.1 in \cite{lubetzky2016harmonic}, though explicit discussion for general finite domains remained limited.\\
Recently, more refined large deviation principles have been established in \cite[Lemma 2.8]{chen2025limiting} for $p=2$ and in \cite[Lemma 7.10]{chen2025limiting} for $p \in (1, \infty) \setminus \{2\}$. However, these sharper estimates typically require the finite domain $\Lambda$ to contain a ball $B_R(x)$ of radius $R$ depending on $h$, and in the case of $p \in(1, \infty)-\{2\}$, the height $h$ must be sufficiently large relative to the inverse temperature $\beta$. While such assumptions are justifiable in equilibrium under entropic repulsion, they pose a challenge in the study of mixing times. Analyzing the dynamics requires estimating the probability of bottleneck events that drive the system toward equilibrium, which necessitates bounds that are valid across arbitrary domains.\\
Therefore, we seek to establish a versatile estimate applicable to any finite domain $\Lambda$, all $h \geq 0$, and any $p \geq 1$. The proof is structured into two regimes. For sites $x$ located near the boundary(boundary region), we extend the Peierls-type argument from Lemma \ref{peierlsimple} and the combinatorial methods found in the proof of Proposition 3.9 of \cite{caputo2014dynamics}. For sites in the bulk(bulk region), we rely on the exponential decay of correlations established in \cite[Theorem 2]{brandenberger1982decay}. This approach was instrumental in the analysis by \cite{lubetzky2016harmonic} and is summarized in Subsection \ref{exponentiallydecay}.

\begin{proposition}\label{LDP_domain}
Fix $p\ge 1$ and for sufficiently large $\beta \gg 1$. Let $\Lambda\subset\mathbb{Z}^2$ be finite, $v\in\Lambda$, and $h\ge 0$. Then there exists a constant $c_1>0$ such that
\[
\hat{\pi}^{p,0}_\Lambda(\eta_v \ge h) \;\le\;
\begin{cases}\label{eq:LDP_domain}
c_1\,e^{-4\beta h}, & \text{if }\ \mathrm{dist}(v,\partial\Lambda)\le \log|\partial\Lambda|,\\[4pt]
\hat{\pi}^{p}(\eta_0 \ge h) + |\Lambda|^{-4}, & \text{if }\ \mathrm{dist}(v,\partial\Lambda)\ge \log|\partial\Lambda|.
\end{cases}
\]
In particular, for every $v$ within distance $\log|\Lambda|$ of the boundary,
\(
\hat{\pi}^{p,0}_\Lambda(\eta_v\ge h)\le c\,e^{-4\beta h}.
\)
\end{proposition}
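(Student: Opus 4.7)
The plan is to treat the two distance regimes in the statement separately. For the interior case $\mathrm{dist}(v,\partial\Lambda)\ge\log|\partial\Lambda|$, I would invoke the low-temperature cluster expansion of Brandenberger and Wayne recalled in Section~\ref{sec:notation}. A standard by-product of that expansion is not just exponential decay of truncated correlations within $\hat{\pi}^{p}$, but the exponential closeness of $\hat{\pi}^{p,0}_\Lambda$ to the infinite-volume Gibbs measure $\hat{\pi}^{p}$ on any bounded local event whose support lies at distance $r$ from $\partial\Lambda$. Applied to the local event $\{\eta_v\ge h\}$, with $r\ge\log|\partial\Lambda|$, this gives
\[
\bigl|\hat{\pi}^{p,0}_\Lambda(\eta_v\ge h)-\hat{\pi}^{p}(\eta_v\ge h)\bigr|
\;\le\; C_1\,e^{-C_2\log|\partial\Lambda|}
\;\le\; |\Lambda|^{-4},
\]
once $\beta$ is large enough that the decay rate $C_2$ exceeds $4$. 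Translation invariance of $\hat{\pi}^{p}$ then replaces $\hat{\pi}^{p}(\eta_v\ge h)$ by $\hat{\pi}^{p}(\eta_0\ge h)$, which closes the interior case.

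For the boundary case $\mathrm{dist}(v,\partial\Lambda)\le\log|\partial\Lambda|$, I would run a Peierls-type contour argument using Lemma~\ref{peierlsimple}. If $\eta_v\ge h$ with $\eta|_{\partial\Lambda}\equiv 0$, then for every $j\in\{1,\dots,h\}$ the sublevel set $\{x:\eta_x\ge j\}$ contains $v$ but is disjoint from $\partial\Lambda$, so its outer boundary must contain a $j$-contour enclosing $v$. Choosing the outermost such $\gamma_j$ at each height produces a nested family $\gamma_1\supseteq\gamma_2\supseteq\cdots\supseteq\gamma_h\ni v$, each of length at least $4$. Iterating the conditional Peierls estimate \eqref{eq2:lemma3.7} along this family yields
\[
\hat{\pi}^{p,0}_\Lambda\!\left(\bigcap_{j=1}^{h}\mathcal{C}_{\gamma_j,j}\right)\;\le\;\prod_{j=1}^{h} e^{-\beta|\gamma_j|},
\]
and summing over all admissible nested $h$-tuples, together with the standard entropy estimate $\sum_{\gamma\ni v}e^{-\beta|\gamma|}\le e^{-4\beta}(1+o_\beta(1))$ (leading contribution from the length-$4$ unit square around $v$), produces the claimed bound $c_1\,e^{-4\beta h}$ for $\beta$ taken sufficiently large.

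The main obstacle is controlling the combinatorial entropy of nested contour families so that no $h$-dependent factor inflates the exponent $4\beta h$. In the SOS case $p=1$ this is smoothed out by the shift invariance of the Hamiltonian under $\eta\mapsto\eta-\mathbf{1}_{\Lambda_\gamma}$, which allows a clean recursion in $h$. For $p>1$ shift invariance breaks because $|a|^p$ is nonlinear, but this is exactly what Lemma~\ref{peierlsimple} is designed to replace: its proof verifies, uniformly in $p\ge 1$, that the same shift still forces an energy decrement of at least $|\gamma|$, and its conditional version \eqref{eq2:lemma3.7} is precisely the inductive tool needed along the nested family. Once $\beta$ is chosen large enough that the per-level contour sum $\sum_{\gamma\ni v}e^{-\beta(|\gamma|-4)}$ is bounded by a constant independent of $p$, the remainder of the argument is routine bookkeeping, and no qualitatively new difficulty arises from the nonlinearity in $p$ beyond what is already absorbed into Lemma~\ref{peierlsimple}.
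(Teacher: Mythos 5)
Your two-regime split---Peierls near the boundary, Brandenberger--Wayne cluster expansion in the interior---matches the paper's proof of Proposition~\ref{LDP_domain} exactly, and your treatment of the interior case is sound. The gap lies in the entropy control for the nested contour sum in the boundary case. The per-level bound $\sum_{\gamma\ni v}e^{-\beta|\gamma|}\le e^{-4\beta}\bigl(1+o_\beta(1)\bigr)$ is correct, but applying it independently at each of the $h$ levels gives
\[
\sum_{(\gamma_1,\dots,\gamma_h)}\prod_{j=1}^{h}e^{-\beta|\gamma_j|}
\;\le\;
\Bigl(\sum_{\gamma\ni v}e^{-\beta|\gamma|}\Bigr)^{h}
\;=\;
e^{-4\beta h}\bigl(1+o_\beta(1)\bigr)^{h},
\]
and the factor $\bigl(1+o_\beta(1)\bigr)^{h}$ grows without bound in $h$ for every fixed $\beta$, so it cannot be absorbed into the $h$-independent constant $c_1$ in the statement. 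Lemma~\ref{peierlsimple} does not fix this: it restores the energy decrement per contour for $p>1$, but the problem here is combinatorial over-counting, which is exactly the same for $p=1$ as for $p>1$, and you name it correctly as ``the main obstacle'' before dismissing it as routine.

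What is actually needed is to exploit the nesting constraint. Since $\Lambda_{\gamma_{j+1}}\subseteq\Lambda_{\gamma_j}$, the enclosed areas form a non-increasing sequence of positive integers, and two nested contours with the same enclosed area coincide; hence any admissible $h$-tuple consists of a prefix $\gamma_1,\dots,\gamma_l$ of contours with $|\gamma_j|\ge 6$ followed by the constant suffix $\gamma_{l+1}=\dots=\gamma_h$ equal to the unit square around $v$. The unit-square suffix contributes exactly $e^{-4\beta(h-l)}$ with no entropy at all, while the prefix contributes at most $e^{-6\beta l}$, and its entropy is controlled by a bound of the form $|\mathcal{A}_M(l,v)|\le |\mathcal{A}_M(1,v)|^{M-1}(l+M)^{M-2}$, which is polynomial in $l$ (not exponential in $h$). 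Summing over $l$ then converges because the extra $e^{-2\beta l}$ beats the polynomial entropy. This stratification---or some equivalent exploitation of the fact that all but a bounded prefix of the nested family is forced to be the unit square---is exactly what replaces your ``routine bookkeeping'' and is what the paper carries out to obtain $c_1$ independent of $h$.
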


% Pf : LDP_domain 사용
\begin{proof}[Proof of Proposition~\ref{LDP_domain}]\label{LDP_domainpf}
To obtain an upper bound on $\hat{\pi}_{\Lambda}^{p,0}(\eta_v \ge h)$, we analyze the behavior by partitioning the domain into a boundary layer and a bulk region, depending on the position of vertex $v$.

\noindent\textbf{Case 1. Boundary region ($\text{dist}(v, \partial\Lambda) \le \log |\Lambda|$).} In this regime, we apply a Peierls-type argument using a sequence of $h$ nested $i$-contours $(\gamma_1, \dots, \gamma_h)$ surrounding $v$. Since the weight of each contour is suppressed by $\hat{\pi}_{\Lambda}^{p,0}(\mathcal{C}_{\gamma, h}) \le \exp(-\beta|\gamma|)$ according to Lemma~\ref{peierlsimple}, the summation over these nested sequences follows the standard combinatorial framework for the $p=1$ SOS model, as established in \cite[Proposition 3.9]{caputo2014dynamics}. By classifying the contours according to their area and summing the corresponding exponential weights, we obtain the following uniform exponential decay upper bound:

\noindent\textbf{Case 2. Bulk region ($\text{dist}(v, \partial\Lambda) \ge \log |\Lambda|$).} When $v$ is sufficiently far from the boundary, we utilize the exponential decay of correlations that holds for the low-temperature $p$-SOS model. Since the distance between the local observable $A_{\{v\}} = \mathbf{1}_{\{\eta_v \ge h\}}$ and the boundary configuration indicator $B_{\partial\Lambda}$ is at least $\log |\Lambda|$, the results of \cite[Theorem 2]{brandenberger1982decay} imply the following:
$$|\hat{\pi}_{\Lambda}^{p,0}(\eta_v \ge h) - \hat{\pi}^p(\eta_0 \ge h)| \le Ce^{-c(\beta)\log |\Lambda|} = C|\Lambda|^{-c} \le |\Lambda|^{-4},$$where the term can be bounded by $|\Lambda|^{-4}$ for sufficiently large $\beta$.\\
Combining these two regimes, we conclude that when $v$ is near the boundary, the probability is controlled by the uniform Peierls bound $c_1 e^{-4\beta h}$. Conversely, when $v$ is in the bulk, the measure converges to the infinite-volume Gibbs measure. This yields the two-branch estimate for Proposition ~\ref{LDP_domain} as claimed.
\end{proof}

The argument combines the tools introduced in Section~\ref{sub:dlr} (DLR formalism) and Section~\ref{sub:fkg} (FKG inequality), which will be used repeatedly to control correlations and to decompose probabilities with respect to appropriate boundary conditions.

\begin{proof}[Proof of Theorem~\ref{prop4.1}]
\label{prop4.1pf}
By Proposition~\ref{comparisonwithnoandyes}, it suffices to work under
the floor measure $\bar{\pi}$ without a ceiling.
We begin with the estimate for $\mathcal{C}_{\gamma,h}$.
By ~\eqref{eq:energy-drop}, the shift map $T_\gamma$
produces an energy decrease of at least $|\gamma|$,
and hence
\[
e^{-\beta H_p(\eta)}
\le
e^{-\beta|\gamma|}\,e^{-\beta H_p(T_\gamma\eta)}.
\]
Since $T_\gamma$ is injective on $\{\eta\in \mathcal{C}_{\gamma,h}:\eta|_{\Lambda_\gamma}>0\}$, it follows that
\begin{equation}\label{eq:peierls-num}
\bar{\pi}^{p,0}_\Lambda\bigl(\mathcal{C}_{\gamma,h}\cap\{\eta|_{\Lambda_\gamma}>0\}\bigr)\;\le\; e^{-\beta|\gamma|}.
\end{equation}
To convert this into a bound under the floor measure, recall that
\begin{equation}
    \bar{\pi}^{p,0}_\Lambda(A)=\hat{\pi}^{p,0}_\Lambda(A\cap\{\eta\ge 0\})\,/\,\hat{\pi}^{p,0}_\Lambda(\eta\ge 0).    
\end{equation}
Conditioning on the configuration on $\partial^\pm_\gamma$ and using FKG/monotonicity, we obtain
\[
\bar{\pi}^{p,0}_\Lambda\!\bigl(\eta\in \mathcal{C}_{\gamma,h},\,\eta|_{\Lambda_\gamma}>0\bigr)
= \hat{\pi}^{p,0}_\Lambda\!\Bigl(
\mathbf{1}_{\{\eta^-_\gamma\le h-1\}}\,
\mathbf{1}_{\{\eta|_{\Lambda \setminus \Lambda_\gamma}\ge 0\}}\,
\mathbf{1}_{\{\eta^+_\gamma\ge h\}}\,
\hat{\pi}^{p,\eta^+_\gamma}_{\Lambda_\gamma}\!\bigl(\eta|_{\Lambda_\gamma}>0\bigr)
\Bigr)\Big/\hat{\pi}^{p,0}_\Lambda(\eta\ge 0).
\]
By monotonicity of the increasing event $\{\eta|_{\Lambda_\gamma}>0\}$ in the outer boundary condition,
\[
\hat{\pi}^{p,\eta^+_\gamma}_{\Lambda_\gamma}\bigl(\eta|_{\Lambda_\gamma}>0\bigr)
\;\ge\;
\hat{\pi}^{p,h}_{\Lambda_\gamma}\bigl(\eta|_{\Lambda_\gamma}>0\bigr)
\;=\;
\hat{\pi}^{p,0}_{\Lambda_\gamma}\bigl(\eta|_{\Lambda_\gamma}>-h\bigr).
\]
Applying FKG to the increasing events $\{\eta_x<h\}$ over $x\in \Lambda_\gamma$ yields
\[
\hat{\pi}^{p,0}_{\Lambda_\gamma}\bigl(\eta|_{\Lambda_\gamma}>-h\bigr)
\;\ge\;
\prod_{x\in \Lambda_\gamma}\Bigl(1-\hat{\pi}^{p,0}_{\Lambda_\gamma}(\eta_x\ge h)\Bigr).
\]
Consequently,
\begin{equation}\label{eq:pos-prob-factor}
\pi^{p,0}_\Lambda\!\bigl(\eta\in \mathcal{C}_{\gamma,h},\,\eta|_{\Lambda_\gamma}>0\bigr)
\;\ge\;
\pi^{p,0}_\Lambda(\mathcal{C}_{\gamma,h})\,
\prod_{x\in \Lambda_\gamma^{\mathrm{in}}}\Bigl(1-\hat{\pi}^{p,0}_{\Lambda_\gamma}(\eta_x\ge h)\Bigr).
\end{equation}
We now bound the product in \eqref{eq:pos-prob-factor}. By the one–point bounds from Proposition~\ref{LDP_domain} (applied inside $\Lambda_\gamma^{\mathrm{in}}$), for sufficiently large $\beta$,
\[
\hat{\pi}^{p,0}_{\Lambda_\gamma}(\eta_x\ge h)\;\le\;
\begin{cases}
c_1\,e^{-4\beta h}, & \text{if }\ \mathrm{dist}(x,\gamma)\le \log|\gamma|,\\[3pt]
\hat{\pi}^p(\eta_0\ge h)+|\Lambda_\gamma|^{-2}, & \text{otherwise}.
\end{cases}
\]
Thus,
\begin{equation}\label{eq:pos-prob-factor1}
    \prod_{x\in \Lambda_\gamma}\Bigl(1-\hat{\pi}^{p,0}_{\Lambda_\gamma}(\eta_x\ge h)\Bigr)
\;\ge\;
\exp\;\!\Bigl(- c_1\,e^{-4\beta h}\,|\gamma|\,\log|\gamma| \Bigr)\,
\exp\;\!\Bigl(- \hat{\pi}^p(\eta_0\ge h)\,A(\gamma)\Bigr).
\end{equation}
Combining this with \eqref{eq:peierls-num} and \eqref{eq:pos-prob-factor} , we obtain
\[
\pi^{p,0}_\Lambda(\mathcal{C}_{\gamma,h})
\;\le\;
\exp\;\!\Bigl(
-\beta|\gamma|
+ \hat{\pi}^p(\eta_0\ge h)\,A(\gamma)
+c_1 e^{-4\beta h}\,|\gamma|\,\log|\gamma|
\Bigr),
\]
which is the desired bound.
\end{proof}

We are now ready to prove Proposition~\ref{comparisonwithnoandyes} by combining the results established so far. Throughout this subsection we fix the box $\Lambda_L=\{1,\dots,L\}^2$. We write $\pi^{p,0}_{\Lambda_L}$ for the $p$-SOS measure with a floor at $0$ and a ceiling at $n_+$, and $\bar{\pi}^{p,0}_{\Lambda_L}$ for the floor-only measure (no ceiling).
\begin{proof}[Proof of Proposition~\ref{comparisonwithnoandyes}]
Let $Z^{p,0}_{\Lambda_L}$ and $\bar{Z}^{p,0}_{\Lambda_L}$ be their partition functions, respectively.
We have the identity

\[
\pi^{p,0}_{\Lambda_L}(A)
= \bar{\pi}^{p,0}_{\Lambda_L}(A)\,  \frac{\bar{Z}^{p,0}_{\Lambda_L}}{Z^{p,0}_{\Lambda_L}}.
\]
Since $\bar{Z}^{p,0}_{\Lambda_L}\ge Z^{p,0}_{\Lambda_L}$, it follows that
\[
\bar{\pi}^{p,0}_{\Lambda_L}(A)\;\le\;\pi^{p,0}_{\Lambda_L}(A).
\]
For the reverse inequality, note that
\[
\frac{Z^{p,0}_{\Lambda_L}}{\bar{Z}^{p,0}_{\Lambda_L}}
= \bar{\pi}^{p,0}_{\Lambda_L}(\eta\le n_+),
\qquad\text{hence}\qquad
\pi^{p,0}_{\Lambda_L}(A)
= \frac{\bar{\pi}^{p,0}_{\Lambda_L}(A)}{\bar{\pi}^{p,0}_{\Lambda_L}(\eta\le n_+)}
= \frac{\bar{\pi}^{p,0}_{\Lambda_L}(A)}{1-\bar{\pi}^{p,0}_{\Lambda_L}\!\left(\bigcup_{v\in\Lambda_L}\{\eta_v>n_+\}\right)}.
\]
By a union bound,
\begin{equation}\label{eq:union}
\pi^{p,0}_{\Lambda_L}(A)
\;\le\;
\frac{\bar{\pi}^{p,0}_{\Lambda_L}(A)}{1-\sum_{v\in\Lambda_L}\bar{\pi}^{p,0}_{\Lambda_L}(\eta_v>n_+)}.
\end{equation}
We now estimate the sum of ceiling-exceedance probabilities under $\bar{\pi}^{p,0}_{\Lambda_L}$.
Split $\Lambda_L$ into a boundary layer and an interior:
\[
\Lambda_{\mathrm{bd}}
:=\{v\in\Lambda_L:\ \mathrm{dist}(v,\partial\Lambda_L)\le \log|\Lambda_L|\},
\qquad
\Lambda_{\mathrm{int}}:=\Lambda_L\setminus\Lambda_{\mathrm{bd}}.
\]
By Proposition~\ref{LDP_domain} (boundary estimate with $h=n_+/2$) there exists $c=c(\beta,p)$ such that, for all $v\in\Lambda_{\mathrm{bd}}$,
\[
\bar{\pi}^{p,0}_{\Lambda_L}(\eta_v>n_+) \;\le\; c\,e^{-2\beta n_+}.
\]
As $|\partial\Lambda_L|=O(L)$ and $\log|\Lambda_L|=\log(L^2)=2\log L$, we obtain
\begin{equation}\label{eq:bd-sum}
\sum_{v\in\Lambda_{\mathrm{bd}}}\bar{\pi}^{p,0}_{\Lambda_L}(\eta_v>n_+)
\;\le\; c\,L\log L\,e^{-2\beta n_+}.
\end{equation}
For $v\in\Lambda_{\mathrm{int}}$, Proposition~\ref{LDP_domain} (bulk estimate) and monotonicity give
\[
\bar{\pi}^{p,0}_{\Lambda_L}(\eta_v>n_+)
\;\le\;
\hat{\pi}^{p}(\eta_0\ge n_+/2)\;+\;|\Lambda_L|^{-2\beta}.
\]
Summing over $|\Lambda_{\mathrm{int}}|\le L^2$ sites yields
\begin{equation}\label{eq:int-sum}
\sum_{v\in\Lambda_{\mathrm{int}}}\bar{\pi}^{p,0}_{\Lambda_L}(\eta_v>n_+)
\;\le\; L^2\,\hat{\pi}^{p}(\eta_0\ge n_+/2)\;+\;L^2\cdot |\Lambda_L|^{-2\beta}
\;=\; L^2\,\hat{\pi}^{p}(\eta_0\ge n_+/2)\;+\;L^{-4\beta+2}.
\end{equation}
Combining \eqref{eq:bd-sum} and \eqref{eq:int-sum},
\[
\sum_{v\in\Lambda_L}\bar{\pi}^{p,0}_{\Lambda_L}(\eta_v>n_+)
\;\le\;
c\,L\log L\,e^{-2\beta n_+}
\;+\;
L^2\,\hat{\pi}^{p}(\eta_0\ge n_+/2)
\;+\;
L^{-4\beta+2}.
\]
Plugging this into \eqref{eq:union} and using $(1-x)^{-1}\le 1+2x$ for $x$ small (here ensured by sufficiently large $\beta$ and $n_+\geq \log L$) gives
\[
\pi^{p,0}_{\Lambda_L}(A)
\;\le\;
\Bigl[
1+2\bigl(
c\,L\log L\,e^{-2\beta n_+}
+L^2\,\hat{\pi}^{p}(\eta_0\ge n_+/2)
+L^{-4\beta+2}
\bigr)
\Bigr]\,
\bar{\pi}^{p,0}_{\Lambda_L}(A).
\]
Together with $\bar{\pi}^{p,0}_{\Lambda_L}(A)\le \pi^{p,0}_{\Lambda_L}(A)$, this proves the proposition.
\end{proof}

% Proof of main theorem
\section{Proofs of Theorem~\ref{maintheorem} and Theorem~\ref{maintheorem2}}\label{sec:maintheorem}
\subsection{Proof of Theorem ~\ref{maintheorem}}
\begin{proof}[Proof of Theorem~\ref{maintheorem}]\label{mainthmpf}

We aim to prove that
\[
\exp\left({cL^{1-o(1)}}\right) \;\le\; \tmix.
\]
In order to establish this lower bound, it suffices to show that the dynamics requires exponentially long time to reach the typical region of configurations.  
By Theorem~\ref{thm-h-event}, let
\[
h = H(p,L) - 2,
\]
where $K$ is the constant appearing in Theorem~\ref{thm-h-event}. Define the event
\begin{align*}
    \mathcal{B} \;&:=\; \left\{ \eta : \#\{ x \in \Lambda_L : \eta_x \ge h+1 \} \,\ge\, \tfrac{1}{2} L^2 \right\}\\
    &=\; \left\{ \eta : \#\{ x \in \Lambda_L : \eta_x \ge H(p, L)-1 \} \,\ge\, \tfrac{1}{2} L^2 \right\}
\end{align*}
and let $\tau_{\mathcal{B}}$ denote the hitting time of $\mathcal{B}$. Since
\begin{equation}
\label{eq-31}
\pi_\Lambda^{p,0}(\mathcal{B}) = 1 - o(1),
\end{equation}
we conclude that if
\[
\tau_{\mathcal{B}} \;\ge\; \exp{\left(cL^{1-o(1)}\right)},
\]
then necessarily $\tau_{\mathrm{mix}} \ge \exp{\left(cL^{1-o(1)}\right)}$. Thus, to complete the proof of the mixing-time lower bound, it suffices to show that there exists a sufficiently small constant $c>0$ such that
\begin{equation}
\label{eq-3100}
\min_{\eta} \, \mathbb{P}^\eta\!\left( \tau_{\mathcal{B}} < \exp{\left(cL^{1-o(1)}\right)} \right) = o(1).
\end{equation}
In other words, starting from the worst possible initial configuration, the dynamics will, with high probability, require at least exponential time to reach the typical region $\mathcal{B}$.\\
Since $\mathcal{B}$ is an increasing event, we can reduce the problem to analyzing the dynamics starting from the minimal configuration $\sqcup$ (all heights equal to zero). Indeed, for any initial law $\nu$ we have
\begin{align*}
    \min_\eta \, \mathbb{P}^\eta\!\left( \tau_{\mathcal{B}} < \exp{\left(cL^{1-o(1)}\right)} \right)&= \mathbb{P}^\sqcup\!\left( \tau_{\mathcal{B}} < \exp{\left(cL^{1-o(1)}\right)} \right)\\
&\;\le\; \mathbb{P}^\nu\!\left( \tau_{\mathcal{B}} < \exp{\left(cL^{1-o(1)}\right)} \right),
\end{align*}
where the equality follows from the monotonicity of the dynamics with respect to the initial condition.
To prove the lower bound on the mixing time, it suffices to show that reaching the typical set
$\mathcal{B}$ requires the creation of a large upward fluctuation, which is exponentially unlikely.
To capture this mechanism, we introduce an atypical set $A$ that excludes configurations
containing macroscopic $h$-contours.

Specifically, for a parameter $\delta\in(0,\tfrac14)$ to be fixed later, define
\begin{equation}
\label{atypical_set}
A
\;:=\;
\bigcap_{\gamma:\,|\Lambda_\gamma|>\delta L^2}
\bigl(\mathcal{C}_{\gamma,h}\bigr)^c,
\end{equation}
that is, $A$ consists of configurations for which no $h$-contour encloses a region of area larger
than $\delta L^2$.
In other words, $A$ describes configurations where the interface remains close to the floor and
does not contain droplets occupying a positive fraction of the system.

The role of the set $A$ is the following.
Any trajectory that reaches the typical region $\mathcal{B}$ must necessarily create a macroscopic
$h$-contour at some intermediate time.
Such a contour carries an energetic cost proportional to its perimeter, and the probability of
its spontaneous formation is therefore exponentially small.
Consequently, starting from configurations in $A$, the dynamics is forced to overcome a large
energetic barrier before it can enter $\mathcal{B}$, leading to an exponential delay.
\begin{comment}
    Having defined the atypical set $A$, we now choose the initial distribution $\nu$ as the equilibrium
measure conditioned on $A$.
To make this precise, let $\delta=\delta_L$ be a positive sequence tending to zero:
\begin{equation}
\label{eq-25-pre}
\delta <
\left[
\left(\beta-\log\mu_2-o(L)\right)
\cdot
\frac{2\bar{\pi}_\Lambda^{p,0}(\eta_0\ge H(p, L)+1)}
{5\beta\,\bar{\pi}_\Lambda^{p,0}(\eta_0\ge h)}
\right]^2,
\end{equation}
where
\[
o(L):=
c_1\log(2(L+1)^2)\cdot
\exp\!\left(-4\beta c_1\sqrt{\frac{\log L}{\beta}}\right),
\]
and $\mu_2$ denotes the connective constant of $\mathbb{Z}^2$. As shown in Appendix~\ref{app:delta}, for all $p>1$ the ratio of one-point tails appearing above is of order $L^{-o(1)}$, and consequently the parameter $\delta_L$ can be chosen so that $\delta_L=L^{-o(1)}$.
\end{comment}

Having defined the atypical set $A$, we now choose the initial distribution $\nu$ as the equilibrium measure conditioned on $A$. To make this precise, let $\delta = \delta_L$ be a positive sequence tending to zero:
\begin{equation}
\label{eq-25-pre}
\delta <
\left[
\left(\beta-\log\mu_2-o(L)\right)
\cdot
\frac{2\bar{\pi}_\Lambda^{p,0}(\eta_0\ge H(p, L)+1)}
{5\beta\,\bar{\pi}_\Lambda^{p,0}(\eta_0\ge h)}
\right]^2,
\end{equation} where $o(L) := c_1 \log(2(L + 1)^2) \cdot \exp\left( -4\beta c_1 \sqrt{\frac{\log L}{\beta}} \right)$. The ratio of one-point tails in \eqref{eq-25-pre} can be estimated using the large deviation results established in the literature. Prior to the estimate, by standard comparison estimates between finite- and infinite-volume measures
(see Proposition~\ref{LDP_domain}), this ratio has the same asymptotic
behavior as
\[
\frac{\hat{\pi}^{p}(\eta_0\ge H(p,L)+1)}
{\hat{\pi}^{p}(\eta_0\ge H(p,L)-2)}.
\]
When $p=2$, we apply the precise height probability characterization from Theorem 3.1 of \cite{lubetzky2016harmonic} (Eq. 3.1):$$\begin{aligned} \frac{\hat{\pi}^2(\eta_0 = H(2,L)+1)}{\hat{\pi}^2(\eta_0 = H(2,L)-2)} &= \prod_{j=-1}^{2} \frac{\hat{\pi}^2(\eta_0 = H(2,L)-j+1)}{\hat{\pi}^2(\eta_0 = H(2,L)-j)}\\ &\le \exp \left( - \sum_{j=-1}^{2} \frac{c \beta (H-j+1)}{\log(H-j+1)} \right) \\ &\approx \exp \left( - \frac{3 c \beta H}{\log H} \right) = L^{-o(1)}, \end{aligned}$$where the last equality follows from the asymptotics $H(2, L) \sim \sqrt{\frac{\log L \log \log L}{2\pi\beta}}$. For the generalized $p$-SOS models ($p \neq 2$), we instead invoke Theorem 7.1 of \cite{chen2025limiting} (Eq. 7.4), which provides a similar uniform bound for all $p \in (1, \infty) \setminus \{2\}$:
$$\begin{aligned} \frac{\hat{\pi}^{p}(\eta_0 = H(p,L)+1)}{\hat{\pi}^{p}(\eta_0 = H(p, L)-2)} 
&\le \exp \left( - c \beta \sum_{j=-1}^{2} (H(p, L)-j)^{(p-1) \wedge 1} \right) \\
&\approx \exp \left( - 3 c \beta H(p, L)^{(p-1) \wedge 1} \right) \\ 
&= \begin{cases} \exp\left(-O\left(\left(\log L)^{\frac{p-1}{p}}\right)\right)\right) & \text{for } 1 < p < 2 \\ \exp\left(-O\left(\sqrt{\log L}\right)\right) & \text{for } 2 < p < \infty \end{cases} = L^{-o(1)} \end{aligned}$$This confirms that the tail ratio is consistently of order $L^{-o(1)}$ for all $p > 1$. Consequently, the parameter $\delta_L$ can be chosen as $\delta_L = L^{-o(1)}$.

In other words, we restrict the initial configuration to those without macroscopic $h$-contours, making it entropically unlikely to reach the target set $\mathcal{B}$ in subexponential time.\\
In the sequel, we denote by $\partial A$ the internal boundary of $A$, defined as
\[
\partial A := \left\{ \eta \in A : p(\eta,\eta') > 0 \text{ for some } \eta' \notin A \right\},
\]
where $p(\cdot,\cdot)$ is the transition probability of the dynamics.\\
Intuitively, $\partial A$ represents the set of configurations in $A$ that can exit $A$ in a single update.
Let $\tau_{\partial A}$ be the hitting time of $\partial A$.

Up to the hitting time $\tau_{\partial A}$, if the chain starts from a configuration in $A \setminus \partial A$, its evolution is identical to that of a Glauber dynamics constrained to remain in $A$, whose reversible distribution is exactly $\nu = \pi_\Lambda^{p,0}(\cdot \mid A)$.  
Applying a union bound over integer times $0 \le t \le \exp{\left(cL^{1-o(1)}\right)}$, we arrive at
\begin{align}
\label{eq:union-bound}
&\mathbb{P}^\nu\!\left( \tau_\mathcal{B} < \exp{\left(cL^{1-o(1)}\right)} \right)\\&\le \mathbb{P}^\nu\!\left( \tau_{\partial A} < \exp{\left(cL^{1-o(1)}\right)} \right)
   + \mathbb{P}^\nu\!\left( \tau_\mathcal{B} < e^{cL},\ \tau_{\partial A} \ge \exp{\left(cL^{1-o(1)}\right)} \right) \\
&\le \exp{\left(cL^{1-o(1)}\right)} \bigl[ \nu(\partial A) + \nu(\mathcal{B}) \bigr]. \nonumber
\end{align}
We next establish exponential decay for the two terms inside the bracket.
To keep the exposition concise, we defer the technical estimates to the lemmas below.
Throughout, we retain the notation introduced earlier (in particular, $A,\tilde A, \delta, \varepsilon_L$ defined above). 
The following two lemmas summarize the required bounds on the boundary term $\nu(\partial A)$ 
and on the probability of the typical set $\nu(\mathcal{B})$.

\begin{lemma}
\label{lem:boundary-term}
For sufficiently large $\beta$, there exists $c_2 = c_2(\beta) > 0$ such that
\begin{align*}
    \nu(\partial A) \;\le\;  \exp{\left(-c_2L^{1-o(1)}\right)}
\end{align*}
\end{lemma}

\begin{lemma}\label{probB}
For sufficiently large $\beta$, there exists $c_3 = c_3(\beta) > 0$ such that
\[
\nu(\mathcal{B}) \;\le\; e^{-c_3 L}.
\]
\end{lemma}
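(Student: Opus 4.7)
The plan is to bound $\nu(\mathcal{B}) = \pi^{p,0}_\Lambda(\mathcal{B}\mid A)$ via a Peierls-type argument on the outer $h$-contours of configurations in $\mathcal{B}\cap A$. First I would extract the geometric content: on $\mathcal{B}$, since $\eta_x\ge h+1$ for at least $L^2/2$ sites, the set $S:=\{x:\eta_x\ge h\}$ also satisfies $|S|\ge L^2/2$. The outer boundaries $\gamma_1,\ldots,\gamma_m$ of the maximal connected components of $S$ are pairwise disjoint $h$-contours with $\sum_i|\Lambda_{\gamma_i}|\ge L^2/2$. Since $\eta\in A$, each $|\Lambda_{\gamma_i}|\le\delta L^2$, and the isoperimetric inequality $|\gamma_i|\ge 4\sqrt{|\Lambda_{\gamma_i}|}$ together with $\sqrt{|\Lambda_{\gamma_i}|}\ge|\Lambda_{\gamma_i}|/\sqrt{\delta L^2}$ yields
\[
\sum_i|\gamma_i|\;\ge\;\frac{4}{\sqrt{\delta}\,L}\sum_i|\Lambda_{\gamma_i}|\;\ge\;\frac{2L}{\sqrt{\delta}}.
\]

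Next, I would apply the joint analogue of Theorem~\ref{prop4.1}, which follows by iterating the shift map $\eta\mapsto\eta-\mathbf{1}_{\bigcup_i\Lambda_{\gamma_i}}$ (valid since the $\Lambda_{\gamma_i}$ are disjoint) and using FKG exactly as in \eqref{eq:pos-prob-factor}--\eqref{eq:pos-prob-factor1}, giving
\[
\bar{\pi}^{p,0}_\Lambda\Bigl(\bigcap_{i=1}^m\mathcal{C}_{\gamma_i,h}\Bigr)\;\le\;\exp\!\Bigl(-\beta\sum_i|\gamma_i|+\hat{\pi}^p(\eta_0\ge h)\sum_i|\Lambda_{\gamma_i}|+c_1 e^{-4\beta h}\sum_i|\gamma_i|\log|\gamma_i|\Bigr).
\]
Using $|\Lambda_{\gamma_i}|\le|\gamma_i|^2/16\le(\sqrt{\delta}L/4)|\gamma_i|$ (on $A$ one has $|\gamma_i|\le 4\sqrt{\delta}L$) together with the defining inequality $L\hat{\pi}^p(\eta_0\ge H+1)<5\beta$, the entropic contribution is bounded by $(5\beta\sqrt{\delta}/4)\,\hat{\pi}^p(\eta_0\ge h)/\hat{\pi}^p(\eta_0\ge H+1)\cdot\sum_i|\gamma_i|$, and the log-correction is absorbed into $\varepsilon_L\sum_i|\gamma_i|$. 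With $\lambda$ as in \eqref{eq-25}, every valid family of disjoint $h$-contours therefore contributes at most $\exp(-(\beta-\lambda)\sum_i|\gamma_i|)$, and by the choice of $\delta$ we have $\beta-\lambda>\log\mu_2$.

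Finally, I would sum over all admissible families. The number of $h$-contours of length $\ell$ through a fixed dual vertex is bounded by $\mu_2^\ell$, and a polymer-style enumeration that exploits the disjointness constraint (thereby avoiding the naive $L^{2m}$ over-counting from $m$ independent contour placements) bounds the number of disjoint families of total length $\ell$ in $\Lambda$ by $\mu_2^\ell e^{o(\ell)}$. Summing the Peierls weight over $\ell\ge 2L/\sqrt{\delta}$ gives $\pi^{p,0}_\Lambda(\mathcal{B}\cap A)\le e^{-c_3'L}$ with $c_3'=(\beta-\lambda-\log\mu_2)(2/\sqrt{\delta})(1-o(1))$. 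Dividing by a lower bound $\pi^{p,0}_\Lambda(A)\ge e^{-o(L)}$, obtained from the fact that $A$ contains many low-height configurations (e.g., all configurations bounded by $h-1$ together with the floor), then yields $\nu(\mathcal{B})\le e^{-c_3 L}$ with $c_3=c_3'/2$ for $L$ sufficiently large.

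\textbf{Main obstacle.} The key technical difficulty is the combinatorial enumeration in the final step: a naive union bound introduces a factor $L^{2m}$ from placements of $m$ disjoint contours, which far exceeds $e^{cL}$ and would destroy the desired decay; one must exploit the disjointness constraint via a polymer/cluster-expansion argument to show that the effective count is only $\mu_2^\ell e^{o(\ell)}$. A secondary but manageable point is producing a subexponential lower bound on $\pi^{p,0}_\Lambda(A)$, which can be done by restricting to low-height configurations and using Proposition~\ref{LDP_domain} to control boundary fluctuations.
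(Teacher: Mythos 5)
Your strategy is genuinely different from the paper's, and unfortunately the critical step you flag as the "main obstacle" is exactly where it breaks. You propose to treat all outer $h$-contours in a single Peierls union bound, relying on a claim that the number of disjoint contour families of total length $\ell$ in $\Lambda$ is $\mu_2^{\ell}e^{o(\ell)}$. This is false when the family may contain many short contours. For instance, the number of families of $m$ disjoint unit-square contours (each of length $4$) is $\binom{L^2}{m}$; for $m=\Theta(L)$ (hence $\ell=\Theta(L)$) this is $e^{\Theta(L\log L)}$, which is not $\mu_2^{\ell}e^{o(\ell)}=e^{O(L)}$, and it overwhelms the Peierls weight $e^{-\theta\ell}=e^{-\Theta(L)}$. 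More generally, for contours of bounded length $k$, the positional entropy $L^{2m}$ contributes $\asymp(2m\log L)$ in the exponent against a Peierls gain of $\asymp\theta km$, and $\log L$ eventually wins. No polymer rearrangement removes the $L^{2}$ per-contour placement freedom for $O(1)$-size contours; that factor is only negligible compared to the weight when each contour is long enough that $e^{-\theta|\gamma|}$ beats $L^{2}\mu_2^{|\gamma|}$, i.e., $|\gamma|\gtrsim\log L$.

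The paper avoids exactly this by splitting $h$-contours into short ($|\gamma|<\log^{2}L$) and long ($|\gamma|>\log^{2}L$). For the short ones it does \emph{not} union-bound over contour families at all: it shows that \emph{conditioned} on a short $h$-contour, each interior site exceeds $h+1$ with probability at most $1/4$, and then uses conditional independence across disjoint interiors plus Hoeffding to get a superexponential bound $e^{-cL^{2-o(1)}}$ on the event that short-contour interiors account for half the area. The Peierls-type union bound is reserved for long contours only, where $L^{2m}\le L^{2\ell/\log^{2}L}=e^{o(\ell)}$. The geometric observation you start with — that on $\mathcal{B}\cap A$ the restricted areas force $\sum_i|\gamma_i|\gtrsim L/\sqrt{\delta}$ — is correct and neat, but it does not by itself localize the length in long contours, so you cannot avoid the short/long dichotomy. (A secondary issue: your lower bound $\pi^{p,0}_\Lambda(A)\ge e^{-o(L)}$ is also not justified and is not obviously true, since at equilibrium the bulk sits above $H(p,L)-K\ge h$ and produces a macroscopic $h$-contour with probability close to $1$; the paper's argument is structured to avoid needing such a lower bound.)
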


From Lemma~\ref{lem:boundary-term}, we already know that the boundary contribution $\nu(\partial A)$ decays exponentially fast in $L$.  
Similarly, Lemma~\ref{probB} ensures the same exponential decay for $\nu(\mathcal{B})$.  
Hence, both terms inside the bracket in \eqref{eq:union-bound} are exponentially small, and it remains to translate these bounds from the auxiliary floor-only measure to the original measure with both floor and ceiling.

Recall that for any measurable 
$A \subset \{0,\dots,n_+\}^{\Lambda}$,
Proposition~\ref{comparisonwithnoandyes} yields
\begin{equation}\label{eq:comparison-floor-ceiling}
\bar{\pi}^{p,0}_\Lambda(A)
\le
\pi^{p,0}_\Lambda(A)
\le
\Big[
1+2\big(
c_1|\gamma|\log|\gamma| e^{-2\beta n_+}
+L^2\,\hat{\pi}^{p}(\eta_0 \ge n_+/2)
+|\Lambda|^{-2\beta+1}
\big)
\Big]
\bar{\pi}^{p,0}_\Lambda(A),
\end{equation}
where $\bar{\pi}^{p,0}_\Lambda$ denotes the floor-only $p$-SOS measure. In the regime $\log L \le n_+$ and for sufficiently large $\beta$,
the correction terms in~\eqref{eq:comparison-floor-ceiling}
are negligible. Indeed,
\begin{equation}\label{eq:ceiling-error}
c_1|\gamma|\log|\gamma| e^{-2\beta n_+}
\le
2c_1(L+1)^2 \log\!\big[2(L+1)^2\big]
\, e^{-2\beta \log L}
\le
L^3 L^{-2\beta}
\le
L^{-2}.
\end{equation}

To bound the remaining term, we require an estimate on the
one-point tail probability $\hat{\pi}^{p}(\eta_0 \ge h)$.
Since the large-deviation behavior differs in the regimes
$1<p<2$, $p=2$, and $2<p<\infty$, the proof is carried out
case by case and deferred to Appendix~\ref{Appendix1}.
There we show that
\begin{equation}\label{eq:onepoint-tail}
\hat{\pi}^{p}(\eta_0 \ge n_+/2) \le L^{-3}.
\end{equation}

Thus, for large enough $\beta$,
\[
\pi^{p,0}_\Lambda(A) = \big(1+o(1)\big)\,\bar{\pi}^{p,0}_\Lambda(A).
\]
As a consequence, any exponentially small probability bound established under $\bar{\pi}^{p,0}_\Lambda$In particular, this implies that all exponentially small probability bounds established under $\bar{\pi}^{p,0}_\Lambda$ carry over directly to the full measure $\pi^{p,0}_\Lambda$.
In conclusion, the second term on the right-hand side of~\eqref{eq:union-bound} is $o(1)$ whenever $c<\min(c_2,c_3)$, completing the proof.
\end{proof}
We begin by estimating the first term in equation~\eqref{eq:union-bound}.
The following lemma provides the required bound.

\begin{proof}[Proof of Lemma~\ref{lem:boundary-term}]
Next, introduce the set
\begin{equation}\label{badevent}
    \tilde{A} := \bigcap_{\gamma:\ |\Lambda_\gamma| > \frac{\delta}{5} L^2} \left(\mathcal{C}_{\gamma,h} \right)^c,
\end{equation}
which consists of configurations with no $h$-contour of area larger than $\delta L^2/5$.  
Since modifying a single site can create at most four disjoint $h$-contours, it follows that $\partial A \subset A \setminus \tilde{A}$.  
Consequently,
\[
\nu(\partial A) = \frac{\pi_\Lambda^{p,0}(\partial A)}{\pi_\Lambda^{p,0}(A)}
\ \le\ \frac{\pi_\Lambda^{p,0}(A \setminus \tilde{A})}{\pi_\Lambda^{p,0}(A)}.
\]

%% 여기까지
We now aim to bound the ratio
\begin{equation}\label{ratioeq}
    \frac{ \pi_\Lambda^{p,0}(A \setminus \tilde{A}) }{ \pi_\Lambda^{p,0}(A) }
\end{equation}
and show that it decays exponentially in $L$.  
To this end, consider a contour $\gamma$ with relative area $|\Lambda_\gamma|/L^2 \in \left(\frac{\delta}{5}, \delta\right)$.  
Define the map $T_\gamma$ as in~(\ref{mapping}), acting on a configuration $\eta$ by lowering every height inside $\gamma$ by one unit, i.e.,
\[
(T_\gamma \eta)_v =
\begin{cases}
\eta_v - 1, & v \in \Lambda_\gamma, \\
\eta_v, & \text{otherwise}.
\end{cases}
\]
Whenever $\eta \in A \cap \mathcal{C}_{\gamma,h}$, the transformed configuration $T_\gamma \eta$ still belongs to $A$, which implies
\[
\pi_\Lambda^{p,0}(A) \ge
\sum_{\eta \in A \cap C_{\gamma,h},\ \eta|_{\Lambda_\gamma} > 0}
\pi_\Lambda^{p,0}(T_\gamma \eta).
\]
In the proof of Theorem~\ref{prop4.1}, by equation~(\ref{eq:peierls-num}), this sum can be bounded above by
\begin{equation}\label{per-1}
    e^{\beta|\gamma|} \cdot \pi_\Lambda^{p,0}(\eta|_{\Lambda_\gamma} > 0 \mid A, \mathcal{C}_{\gamma,h}) \cdot
\pi_\Lambda^{p,0}(A \cap \mathcal{C}_{\gamma,h}).    
\end{equation}
It remains to control the middle factor.  
Similarly, in the proof of Theorem~\ref{prop4.1}, from equation~(\ref{eq:pos-prob-factor1}) we obtain a product-form lower bound, obtained via independence across sites and uniform one-point estimates, which yields
\begin{equation}\label{per-2}
    \pi_\Lambda^{p,0}(\eta|_{\Lambda_\gamma} > 0 \mid \mathcal{C}_{\gamma,h})
\ \ge\ \exp\left( -\, |\Lambda_\gamma|\, \bar{\pi}^p(\eta_0 \ge h)
- c_1\, |\gamma| \log|\gamma|\, e^{-4\beta h} \right).
\end{equation}

Combining (\ref{per-1}), (\ref{per-2}), we deduce
\begin{equation}\label{eq-29}
\pi_\Lambda^{p,0}(\mathcal{C}_{\gamma,h} \mid A)
\le \exp\left( -\beta |\gamma| + |\Lambda_\gamma| \bar{\pi}^p(\eta_0 \ge h)
+ c_1 |\gamma| \log|\gamma|\, e^{-4\beta h} \right),
\end{equation}
which is the desired contour probability bound.

% 여기까지
By the planar isoperimetric inequality on $\mathbb{Z}^2$, any contour $\gamma$ satisfies
\[
|\Lambda_\gamma| \le \frac{|\gamma|^2}{16}.
\]
When $|\Lambda_\gamma| < \delta L^2$, this yields
\begin{equation}\label{first bound}
\begin{aligned}
|\Lambda_\gamma| \cdot \bar{\pi}^p(\eta_0 \ge h) 
&\le \cdot \sqrt{ \delta L^2 } \cdot \sqrt{ \frac{|\gamma|^2}{16} } \cdot \bar{\pi}^p(\eta_0 \ge h) \\
&= \sqrt{\delta} \cdot \frac{1}{2} L |\gamma| \cdot \bar{\pi}^p(\eta_0 \ge h) \\
&\le \left[ \left( \beta - \log \mu_2 - \varepsilon_L \right) \cdot \frac{2 \bar{\pi}^p(\eta_0 \ge H+1)}{5\beta  \bar{\pi}^p(\eta_0 \ge h)} \right] \cdot L |\gamma| \cdot \frac{1}{2} \cdot \bar{\pi}^p(\eta_0 \ge h) \\
&\le \left( \beta - \log \mu_2 - \varepsilon_L \right) \cdot \frac{5\beta}{5\beta L} \cdot L |\gamma| \\
&= \left( \beta - \log \mu_2 - \varepsilon_L \right) |\gamma|.
\end{aligned}
\end{equation}
On the other hand, recall that $\log L\leq n_+$, so that
\begin{equation}\label{second bound}
\begin{aligned}
c_1 |\gamma| \log |\gamma| \cdot e^{-4\beta h}
&\le c_1 |\gamma| \log \left( 2(L+1)^2 \right) \cdot e^{-4\beta h} \quad (\text{since } |\gamma| \le 2(L+1)^2) \\
&\le c_1 |\gamma| \log \left( 2(L+1)^2 \right) \cdot e^{-4\beta \log L}\\
&= c_1 |\gamma| \cdot \exp\left( \log \log \left( 2(L+1)^2 \right) - 4\beta \log L )\right) \\
&= o(|\gamma|).
\end{aligned}
\end{equation}
Hence, the right-hand side of~\eqref{eq-29} is bounded above by $e^{-(\beta - \lambda)|\gamma|}$.\\
To upper bound of ~\eqref{ratioeq}, we can modify that ratio as below. Due to the definition of ~\eqref{badevent}, we get
\begin{align*}
    A^{\mathrm c}\bigcup_{\gamma:\,|\Lambda_\gamma|>\frac15\delta L^2} C_{\gamma,h},\quad 
A\setminus \tilde A
&= A\cap \tilde A^{\mathrm c}
\subset \bigcup_{\gamma:\,|\Lambda_\gamma|>\frac15\delta L^2} (A\cap C_{\gamma,h}),
\end{align*}
Therefore we get the upper bound of ~\eqref{ratioeq} as
\begin{equation}\label{ratioupper}
    \begin{aligned}
    \frac{\pi_\Lambda^0(A\setminus\tilde A)}{\pi_\Lambda^0(A)}
    &\leq
    \sum_{\gamma:\,|\Lambda_\gamma|>\frac15\delta L^2}\frac{\pi_\Lambda^0(A\cap C_{\gamma,h})}{\pi_\Lambda^0(A)}
    \\&=
    \sum_{\gamma:\,|\Lambda_\gamma|>\frac15\delta L^2}\pi_\Lambda^0(C_{\gamma,h}\mid A).
    \end{aligned}
\end{equation}
Then we already get the upper bound of ~\eqref{eq-29}, so we get

\begin{align*}
\frac{\pi_\Lambda^0(A\setminus\tilde A)}{\pi_\Lambda^0(A)}
&\le
\sum_{\gamma:\,|\Lambda_\gamma|>\frac15\delta L^2}
e^{-(\beta-\lambda)|\gamma|}.
\end{align*}
We now bound the right-hand side by summing over all contours with sufficiently large area.
Using the isoperimetric inequality, any contour $\gamma$ with $|\Lambda_\gamma|>(\delta/5)L^2$
must have length $|\gamma|\ge c\sqrt{\delta}\,L$ where $c$ is a constant that satisfying $0<c<\sqrt{16/5}$.
Moreover, the number of contours of length $k$ is at most $L^2\mu_2^k$.
Therefore,
\begin{equation}
\begin{aligned}\label{isoandpeier}
    \sum_{\gamma:\,|\Lambda_\gamma|>\frac15\delta L^2}
e^{-(\beta-\lambda)|\gamma|}
&\leq
\sum_{k\ge c\sqrt{\delta}L}
\#\{\gamma:\,|\gamma|=k\}\,e^{-(\beta-\lambda)k}
\\
&\leq
\sum_{k\ge c\sqrt{\delta}L}
L^2\,\mu_2^{\,k}\,e^{-(\beta-\lambda)k}
\\
&=
L^2\sum_{k\ge c\sqrt{\delta}L}
e^{-(\beta-\lambda-\log\mu_2)k}
\\
&\leq
\exp(-c_2\sqrt{\delta}\,L),
\end{aligned}
\end{equation}

In conclusion, the above estimate shows that the contribution of contours
with $|\Lambda_\gamma|>\tfrac15\delta L^2$ is exponentially small, and in particular the first term in the right-hand side of~\eqref{eq:union-bound} is $o(1)$ provided $c<c_1$. This verifies that the union bound in~\eqref{eq:union-bound} is valid, and therefore completes the proof of Lemma~\ref{lem:boundary-term}.
\end{proof}
We next turn to the second term in equation~\eqref{eq:union-bound}.
The following lemma establishes the corresponding bound.
\begin{proof}[Proof of Lemma~\ref{probB}] First, we claim that for any short $h$-contour $\gamma$ and any $v \in \Lambda_\gamma$, where short means that $|\gamma| < \log^2 L$, we have
\begin{equation}
\label{eq-h+1-given-C(gamma,h)}
\bar{\pi}^{p,0}_\Lambda \left( \eta_v \ge h+1 \mid \mathcal{C}_{\gamma,h} \right) \le \tfrac{1}{4}.
\end{equation}
Indeed, let $\mathscr{A} = \{ \gamma' : v \in \Lambda_{\gamma'} \subseteq \Lambda_\gamma \}$.
Then, by applying~\eqref{mapping} from Lemma~\ref{peierlsimple}, we obtain
\[
\hat{\pi}^{p,0}_\Lambda \left( \eta_v \ge h+1 \mid \mathcal{C}_{\gamma,h} \right)
\le \sum_{\gamma' \in \mathscr{A}} \hat{\pi}^{p,0}_\Lambda \left( C_{\gamma', h+1} \mid \mathcal{C}_{\gamma,h} \right)
\le \sum_{\gamma' \in \mathscr{A}} e^{-\beta |\gamma'|} \le \tfrac{1}{8},
\]
for $\beta$ sufficiently large, since the number of contours $\gamma' \in \mathscr{A}$ of length $k$ is at most $k \mu_2^k$, using the fact that each such $\gamma'$ crosses a horizontal line within distance $k$ to the right of $v$.
To transfer this estimate to the setting with a floor, observe that by Remark 3.8 in~\cite{caputo2014dynamics}, we have
\begin{equation}
\label{eq-h+1-given-C(gamma,h)-intermediate}
\bar{\pi}^{p,0}_\Lambda \left( \eta_v \ge h+1 \mid \mathcal{C}_{\gamma,h} \right)
= \frac{
\hat{\pi}^{p,0}_\Lambda \left( \eta_v \ge h+1,\ \eta|_{\Lambda_\gamma} \ge 0 \mid \mathcal{C}_{\gamma,h} \right)
}{
\hat{\pi}^{p,h}_{\Lambda_\gamma} \left( \eta|_{\Lambda_\gamma} \ge 0 \mid \eta|_{\partial_\gamma} \ge h \right)
}.
\end{equation}
We have already established that the numerator is at most $1/8$. For the denominator, by monotonicity,
\[
\hat{\pi}^{p, h}_{\Lambda_\gamma} (\eta\restriction_{\Lambda
_{\gamma}}
\geq0 ) = \hat{\pi}^{p, 0}_{\Lambda_\gamma} (\eta\restriction_{\Lambda
_{\gamma}}
\geq-h ) \geq1 - \hat{\pi}^{p, 0}(\eta\geq0)|\Lambda_\gamma|-|\gamma|\log|\gamma|e^{-4\beta h}.
\]
using Proposition~\ref{LDP_domain} and a union bound over the sites in $\Lambda_\gamma$.
Since $|\Lambda_\gamma| \le |\gamma|^2 = O(\log^4 L)$, the last term is $L^{-1+o(1)}$, so the denominator in~\eqref{eq-h+1-given-C(gamma,h)-intermediate} is at least $7/8$.

Hence,
\[
\bar{\pi}^{p,0}_\Lambda \left( \eta_v \ge h+1 \mid \mathcal{C}_{\gamma,h} \right) \le \frac{1/8}{7/8} \le \tfrac{1}{4},
\]
as claimed in~\eqref{eq-h+1-given-C(gamma,h)}.
%% 여기까지
With inequality~\eqref{eq-h+1-given-C(gamma,h)} at hand, we now establish a concentration bound
for the contribution of short $h$-contours.
Specifically, we show that the total number of sites at height at least $h+1$ contained in
the interiors of short contours is sharply concentrated.

More precisely, we prove that
\begin{equation}\label{concentration_inequality1}
\bar{\pi}^{p,0}_\Lambda \left(
\sum\nolimits'_\gamma \#\left\{ x \in \Lambda_\gamma : \eta_x \ge h+1 \right\}
\ge \tfrac{1}{2} L^2
\right)
\le \exp\left( -c_2 L^{2 - o(1)} \right),
\end{equation}
where the summation $\sum\nolimits'_\gamma$ ranges over all short $h$-contours $\gamma$.
For each admissible $h$-contour $\gamma$ (indexed in the primed sum), define the random variable
\[
X_\gamma := \#\left\{ x \in \Lambda_\gamma : \eta_x \ge h+1 \right\}.
\]
Conditioned on the event that all relevant $\gamma$ are $h$-contours (i.e., $\eta_x \ge h$ on $\Delta^+_\gamma$), the random variables $\{X_\gamma\}_\gamma$ are mutually independent, since the associated domains $\Lambda_\gamma$ are disjoint.
By inequality~\eqref{eq-h+1-given-C(gamma,h)}, the expectation of each $X_\gamma$ satisfies
\[
\mathbb{E}[X_\gamma] \le \tfrac{1}{4} |\Lambda_\gamma|.
\]
Furthermore, since $X_\gamma$ counts the number of sites in $\Lambda_\gamma$ where $\eta_x \ge h+1$ and each height $\eta_x$ is trivially bounded by $\log L$ due to the ceiling, we have the uniform bound
\[
X_\gamma \le |\Lambda_\gamma| \le (\log L)^4,
\]
where we used that $|\gamma| \le \log^2 L$ implies $|\Lambda_\gamma| \le |\gamma|^2 \le (\log L)^4$.
Letting $Y := \sum\nolimits'_\gamma X_\gamma$ be the total number of sites (across all such contours) at height at least $h+1$, we observe
\[
\mathbb{E}[Y] \le \tfrac{1}{4} \sum |\Lambda_\gamma| \le \tfrac{1}{4} L^2.
\]
Since the $\{X_\gamma\}$ are independent and uniformly bounded by $(\log L)^4$, we may apply Hoeffding's inequality to obtain
\[
\mathbb{P}(Y \ge \tfrac{1}{2} L^2)
\le \exp\left( - \frac{2 \left( \tfrac{1}{4} L^2 \right)^2 }{ N (\log L)^8 } \right),
\]
where $N \le L^2$ is the number of disjoint $h$-contours in the sum.
%% 여기까지
Thus,
\[
\mathbb{P}\left( Y \ge \tfrac{1}{2} L^2 \right)
\le \exp\left( - \frac{L^4}{8 L^2 (\log L)^8} \right)
= \exp\left( - \frac{L^2}{8 (\log L)^8} \right).
\]
Replacing constants appropriately, we recover inequality~(\ref{concentration_inequality1}), where the summation $\sum'$ is over all short $h$-contours $\gamma$.\\
Next goal in this part is to control the contribution of long $h$-contours.
More precisely, we will prove that there exists a constant $c_3>0$ such that
\begin{equation}\label{concentration_inequality2}
    \bar{\pi}_\Lambda^{p,0} \left(
    \sum_{\gamma \in B}'' \#\left\{ v \in \Lambda_\gamma : \eta_v \ge h+1 \right\}
    \ge \tfrac{1}{2} L^2
    \right)
    \le \exp(-c_3 L),
\end{equation}
where the summation $\sum''$ is over all long $h$-contours $\gamma$, namely those with
$|\gamma|>\log^2 L$.
Let $B = B(\eta)$ denote the collection of $h$-contours $\gamma$ of a configuration $\eta$ such that $|\gamma| > \log^2 L$:
\begin{align*}
    B := \left\{ \gamma : \gamma \text{ is an } h\text{-contour of } \eta \text{ with } |\gamma| > \log^2 L \right\}.
\end{align*}
Suppose that for some $\theta > \log \mu_2$, we have the following uniform bound for all such contours $\gamma$ with small area, as ensured by an inequality of the form~\eqref{eq-29}:
\[
\bar{\pi}^{p,0}_\Lambda(\mathcal{C}_{\gamma,h} ) \le \exp( -\theta |\gamma| ).
\]
Indeed, in our setting, combining the estimate
\[
\bar{\pi}^{p,0}_\Lambda \left(\mathcal{C}_{\gamma,h} \mid A \right)
\le \exp\left( -\beta |\gamma| + \bar{\pi}^p(\eta_0 \ge h) |\Lambda_\gamma| + 2c |\gamma| \log |\gamma| \cdot e^{-4\beta h} \right)
\]
with the isoperimetric inequality $|\Lambda_\gamma| \le \tfrac{|\gamma|^2}{16}$ and applying bounds~\eqref{first bound} and~\eqref{second bound}, we obtain that for contours $\gamma$ with $|\Lambda_\gamma| \le \delta L^2$ and for $\delta > 0$ sufficiently small,
\[
\bar{\pi}^{p,0}_\Lambda \left(\mathcal{C}_{\gamma,h} \mid A \right)
\le \exp\left( -(\beta - \lambda) |\gamma| \right)
= \exp( -\theta |\gamma| ),
\]
where $\theta := \beta - \lambda > \log \mu_2$ provided that $\beta$ is sufficiently large.
%% 여기까지
Let $\gamma_1, \dots, \gamma_m$ be $m$ disjoint $h$-contours with $|\gamma_i| > \log^2 L$ and disjoint interiors $\{ \Lambda_{\gamma_i} \}_{i=1}^m$.
Then, by independence over disjoint supports and the exponential bound on each contour, we have:
\begin{equation}\label{eq:33}
\bar{\pi}_\Lambda^{p,0} \left( \bigcap_{i=1}^m C_{\gamma_i, h} \right)
\le \exp\left( -\theta \sum_{i=1}^m |\gamma_i| \right).
\end{equation}
We now bound the total probability that such a collection of $m$ disjoint long $h$-contours appears, by summing over all such contour configurations.
For each $\gamma_i$ of length $k_i = |\gamma_i|$, there are at most $L^2$ possible choices for its origin (starting point) and at most $R^{k_i}$ possible choices for a self-avoiding path of length $k_i$.
Therefore, for a fixed $m$-tuple of contour lengths $(k_1, \dots, k_m)$ with each $k_i > \log^2 L$, the number of such contour configurations is at most
\[
\prod_{i=1}^m L^2 \mu_2^{k_i}.
\]
Combining this with the exponential probability bound~\eqref{eq:33}, we obtain
\[
\sum_{k_1, \dots, k_m > \log^2 L}^{L^2} \left( \prod_{i=1}^m L^2 \mu_2^{k_i} e^{-\theta k_i} \right)
= \left( \sum_{k > \log^2 L}^{L^2} L^2 \mu_2^k e^{-\theta k} \right)^m.
\]
Finally, summing over all $m \ge 1$, we obtain the total probability bound:
\begin{align*}
\bar{\pi}_\Lambda^{p,0} \left( \exists \text{ disjoint long } h\text{-contours } \gamma_1, \dots, \gamma_m \right)
&\le \sum_{m=1}^\infty \left( \sum_{k > \log^2 L}^{L^2} L^2 \mu_2^k e^{-\theta k} \right)^m.
\end{align*}
Since $\theta > \log R$, the inner sum satisfies
\[
\sum_{k > \log^2 L}^{L^2} L^2 \mu_2^k e^{-\theta k} \le L^2 \sum_{k > \log^2 L} e^{-c k} = O\left( e^{-c \log^2 L} \right),
\]
for some $c = \theta - \log R > 0$.
Therefore, the entire series converges and satisfies
\begin{equation}\label{eq:4.10}
\sum_{m=1}^\infty \left( e^{-c \log^2 L} \right)^m = O\left( e^{-c \log^2 L} \right).
\end{equation}
We now choose the threshold $T$ to match the event in~\eqref{eq:4.10}, which corresponds to the number of vertices $v$ in the interiors of long $h$-contours $\gamma$ such that $\eta_v \ge h+1$ exceeding $\tfrac{1}{2} L^2$.
By the isoperimetric inequality, each such $\gamma$ satisfies $|\Lambda_\gamma| \le \frac{1}{16} |\gamma|^2$.
To ensure that the total area covered is at least $\tfrac{1}{2} L^2$, the total contour length must satisfy
\[
\sum_{\gamma \in B} |\gamma| \ge C L \quad \text{for some constant } C > 0.
\]
Therefore, we set $T = C L$ and deduce from the previous exponential tail bound:
\[
\bar{\pi}_\Lambda^{p,0} \left( \sum_{\gamma \in B} |\gamma| \ge T \right)
\le \exp(-c T) = \exp(-c C L).
\]

Combining this with the isoperimetric inequality again, we obtain that
\[
\sum_{\gamma \in B} \#\left\{ v \in \Lambda_\gamma : \eta_v \ge h \right\} \ge \tfrac{1}{2} L^2
\quad \Rightarrow \quad \sum_{\gamma \in B} |\gamma| \ge c' L,
\]
for some constant $c' > 0$. Hence,
\begin{equation}\label{concentration_inequality2}
    \bar{\pi}_\Lambda^{p,0} \left( \sum_{\gamma \in B}'' \#\left\{ v \in \Lambda_\gamma : \eta_v \ge h+1 \right\} \ge \tfrac{1}{2} L^2 \right)
\le \exp(-c_3 L),
\end{equation}
where the summation $\sum''$ is over all long $h$-contours $\gamma$, and $c_3 > 0$ is an absolute constant.
Hence, combining inequalities~\ref{concentration_inequality1} and~\ref{concentration_inequality2}, we obtain the bound~\ref{probB}.
\end{proof}

\subsection{Proof of Theorem~\ref{maintheorem2}}
Fix $a\in(0,1)$ and $p\in(1,\infty)$. Our goal is to prove an exponential lower bound on the hitting time
\[
\tau_{p,a}\ :=\ \min\{t\ge 0:\ \eta(t)\in\Omega_{p,a}\},
\qquad 
\Omega_{p,a}\ :=\ \bigl\{\eta:\ \#\{x:\eta_x\ge aH(p,L)\}\ge (1-\varepsilon_\beta)|\Lambda|\bigr\}.
\]
The argument parallels the proof of Theorem~\ref{maintheorem}, with the only difference that we now work at the intermediate height
\begin{align*}
    h:=aH(p,L)-1,
\end{align*}
so the initial law is supported on configurations forbidding \emph{large} $h$-contours at a lower level than in Theorem~\ref{maintheorem}. This lowers the critical droplet area scale, and the precise exponent depends on $p$.

Recall the definition
\begin{equation}\label{d(p)}
d(p):=\begin{cases}
p,& 1<p<2,\\
2,& 2\le p<\infty.
\end{cases}
\end{equation}
The exponent $d(p)$ arises from the energetic cost required to create a macroscopic droplet reaching height $aH(p,L)$. In the regime $1<p<2$, the Hamiltonian contains nonlinear gradient terms of the form $|\nabla \eta|^p$, and lifting the interface over a region of macroscopic area produces an energy cost governed by $p$-power scaling. In contrast, for $p\ge2$ the fluctuations are effectively Gaussian, and the dominant energetic contribution becomes quadratic. This change in fluctuation behavior leads to the transition at $p=2$ in the definition of $d(p)$ in~\eqref{d(p)}.
The exponent $d(p)$ plays a crucial role in the dynamical analysis: it determines the probability of creating the critical droplet required for the interface to reach an intermediate level $aH(p,L)$, and therefore directly controls the lower bound on the corresponding hitting time. Using the one-point large-deviation estimates summarized in Table~\ref{tab:HpL}, we obtain that for every $\varepsilon>0$ and all sufficiently large $L$ (for large inverse temperature $\beta$), \begin{equation}\label{eq:onepoint}
\hat\pi^p\big(\eta_0\ge aH(p,L)-1\big)
\le
L^{-a^{\,d(p)}+\varepsilon}.
\end{equation}
The derivation of~\eqref{eq:onepoint} requires a case-by-case analysis depending on the regimes $1<p<2$ and $2\le p<\infty$, since the large-deviation rate changes qualitatively at $p=2$. To avoid interrupting the main dynamical argument, the detailed proof is deferred to Lemma~\ref{one-point-lemma} and Appendix~\ref{Appendix2}.

\begin{lemma}\label{one-point-lemma}
Fix $a\in(0,1)$.
For sufficiently large $\beta$ and all $L$ large enough,
the infinite-volume $p$-SOS measure on $\mathbb Z^2$
satisfies the following one-point upper bounds:
\begin{enumerate}
\item If $1<p<2$, then
\[
\hat{\pi}^{p}\!\left(\eta_0 \ge aH(p,L)-1\right)
\le L^{-a^p+\varepsilon}.
\]
\item If $2\le p<\infty$, then
\[
\hat{\pi}^{p}\!\left(\eta_0 \ge aH(p,L)-1\right)
\le L^{-a^2+\varepsilon}.
\]
\end{enumerate}
\end{lemma}

\begin{proof}[Proof of Theorem ~\ref{maintheorem2}]
We aim to establish an exponential lower bound on the hitting time
\(\tau_{p,a}\), which depends on the parameters \(a\in(0,1)\) and
\(p\in(1,\infty)\). The proof closely follows that of
Theorem~\ref{maintheorem}. In particular, we will prove \eqref{eq-3100} under the modified choice
\[
h:=a\,H(p,L)-1.
\]
with this choice of $h$, Lemma~\ref{one-point-lemma} implies that for any $\varepsilon>0$ and all sufficiently large $L$,
\[
\hat\pi_p(\eta_0 \ge h)
=\hat\pi_p(\eta_0 \ge aH(p,L)-1)
\le L^{-a^{d(p)}+\varepsilon}.
\]
With this change, all ingredients of the proof remain the same as
in Theorem~\ref{maintheorem}. The only difference lies in the choice of the initial law for the Glauber dynamics.: We condition on the same atypical event \eqref{atypical_set} as before, but evaluated at the lower height level
\[
h:=aH(p,L)-1.
\]
Then, using the same parameter $\varepsilon$ as in Lemma~\ref{one-point-lemma},
we define $\delta$ by
\[
\delta=\delta_L:=L^{-2\varepsilon},
\]
and define the restricted event
\begin{equation}\label{eq:Apa}
A_{p,a}\;:=\;\bigcap_{\gamma:\ |\Lambda_\gamma|>\,\delta\,L^{\,2{a^{d(p)}}}}\big(\mathcal{C}_{\gamma,h}\big)^{\mathrm c},
\end{equation}
i.e., $A_{p,a}$ is the event that there exists no $h$-contour $\gamma$ whose enclosed area exceeds $\delta L^{2{a^{d(p)}}}$. This $\delta$ is tend to zero such that, for $|\Lambda|\leq\delta L^{2^{a(p)}}$;
\begin{align*}
    \,|\Lambda_\gamma|\,\hat\pi^p(\eta_0\ge h)
\;\le\;
\,\delta\,L^{2a^{d(p)}}\cdot L^{-a^{d(p)}+\varepsilon}
\;=\;
\,\delta\,L^{\,a^{d(p)}+\epsilon}.
\end{align*}
We start the dynamics from the equilibrium measure conditioned on $A_{p,a}$, namely $\nu:=\pi(\,\cdot\,|\,A_{p,a})$. The target set is
\[
\mathcal{B}_a\;:=\;\Bigl\{\eta:\ \#\{x:\ \eta_x\ge h+1\}\ge \tfrac12 L^2\Bigr\}.
\]
By the same argument as in \eqref{eq:union-bound}, we obtain
\begin{equation}\label{eq:std-decomp}
\mathbb{P}_\nu\!\big(\tau_{\mathcal{B}_a}<\exp{\left(c\,L^{\,{a^{d(p)}-o(1)}}\right)}
\;\le\;\exp{\left(c\,L^{\,{a^{d(p)}-o(1)}}\right)}\,\Bigl\{\nu(\partial A_{p,a})+\nu(\mathcal{B}_a)\Bigr\}.
\end{equation}
Consequently, if the right-hand side is $o(1)$, then $\tau_{\mathcal{B}_a}\ge \exp{\left(c\,L^{\,{a^{d(p)}-o(1)}}\right)}$ with probability $1-o(1)$. Since $\mathcal{B}_a\subset\Omega_{p,a}$ and $\pi(\Omega_{p,a})\rightarrow1$ at equilibrium, the desired lower bound follows.

First we begin by showing that
\[
\nu(\partial A_{p,a}) \le \exp\bigl(-c\,L^{\,a^{d(p)}-o(1)}\bigr).
\]
To this end, we introduce an auxiliary event $\widetilde A$, defined analogously to
\eqref{badevent}, and define
\[
A \,:=\, \bigcap_{\gamma:\,|\Lambda_\gamma|>\frac{\delta}{5}\,L^{a^{d(p)}}}
\bigl(C_{\gamma,h}\bigr)^{\mathrm c},
\]
where $h:=aH(p,L)-1$ and $\delta=\delta_L$ is chosen as in
\eqref{first bound}.

Our goal is to establish the upper bound in \eqref{ratioeq}.
By construction, the choice of $\delta=\delta_L$ ensures that
\eqref{first bound} holds, and \eqref{second bound} is also satisfied.
Therefore, we obtain
\[
\frac{\pi^{0}_{\Lambda}(A\setminus \widetilde A)}{\pi^{0}_{\Lambda}(A)}
\;\le\;
\sum_{\gamma:\,|\Lambda_\gamma|>\frac{1}{5}\delta L^{2}}
\exp\bigl(-(\beta-\lambda)|\gamma|\bigr).
\]

Applying the isoperimetric inequality together with the Peierls-type
argument used in \eqref{isoandpeier}, we conclude that the right-hand
side is bounded above by
\[
\exp\bigl(-c\,L^{\,a^{d(p)}-o(1)}\bigr)
\]
for some constant $c>0$. This yields the desired upper bound on
$\nu(\partial A_{p,a})$ and completes the proof of the claim.

By the isoperimetric inequality $|\Lambda_\gamma|\le |\gamma|^2/16$ and a Peierls–type estimate, we have
\[
\pi\big(\mathcal{C}_{\gamma,h}\big)
\ \le\
\exp\!\Big\{-\beta|\gamma|
\;+\;\,|\Lambda_\gamma|\,\hat\pi^p(\eta_0\ge h)
\;+\;c_0e^{-4\beta h}\,|\gamma|\,\log|\gamma|\Big\}.
\]
Invoking Lemma~\ref{lem:boundary-term} together with \eqref{eq:onepoint}, and for contours satisfying $|\Lambda_\gamma|\le \delta L^{2a^{d(p)}}$, we obtain
\[
\,|\Lambda_\gamma|\,\hat\pi^p(\eta_0\ge h)
\;\le\;
\,\delta\,L^{2a^{d(p)}}\cdot L^{-a^{d(p)}+\varepsilon}
\;=\;
\,\delta\,L^{\,a^{d(p)}+\varepsilon}.
\]
Choosing $\delta>0$ and $\varepsilon>0$ sufficiently small (and then $L$ sufficiently large), the perimeter term $-\beta|\gamma|$ dominates both the area correction and the logarithmic factor $e^{-4\beta h}\,|\gamma|\log|\gamma|$. Summing over admissible contours and applying a union bound yields
\begin{equation}\label{eq:bdry}
\nu(\partial A_{p,a})
\;\le\;
\exp\!\big\{-\,c_1\,L^{\,a^{d(p)}-o(1)}\big\},
\end{equation}
for some $c_1=c_1(\beta,p,a)>0$.

We next prove that
\begin{equation*}\label{eq:target}
    \nu(\mathcal B_a)\;\le\;\exp\bigl\{-c_2\,L^{\,a^{d(p)}}\bigr\}.
\end{equation*}
As in the proof of Lemma~\ref{probB}, the argument proceeds by separating the
contribution of short and long contours.

More precisely, we say that an $h$-contour $\gamma$ is \emph{short} if
\[
|\gamma|<\log^2 L,
\]
and \emph{long} otherwise.
We first show that
\[
\bar\pi^{p,0}_{\Lambda}\!\left(
\sum\nolimits_{\gamma}^{\prime}
\#\{x\in\Lambda_\gamma:\eta_x\ge h+1\}
\;\ge\;\tfrac12 L^2
\right)
\;\le\;
\exp\bigl(-c_2L^{2-o(1)}\bigr),
\]
where the summation $\sum_{\gamma}^{\prime}$ ranges over all short $h$-contours
$\gamma$.
As in the proof of Theorem~1, this bound follows from
\eqref{eq-h+1-given-C(gamma,h)}, together with a concentration argument.
In particular, after conditioning on the short contours, we estimate the
resulting sum by applying Hoeffding’s inequality.

We next consider the contribution of long contours.
We prove that there exists a constant $c_3>0$ such that
\[
\bar\pi^{p,0}_{\Lambda}\!\left(
\sum\nolimits_{\gamma}^{\prime\prime}
\#\{x\in\Lambda_\gamma:\eta_x\ge h+1\}
\;\ge\;\tfrac12 L^2
\right)
\;\le\;
\exp(-c_3L),
\]
where the summation $\sum_{\gamma}^{\prime\prime}$ is over all long $h$-contours
$\gamma$.
This estimate is obtained exactly as in the proof of Theorem~1, by combining a
Peierls-type argument with an isoperimetric bound on the contour length.

Combining \eqref{eq:bdry} and \eqref{eq:target} into \eqref{eq:std-decomp} yields
\[
\mathbb{P}_\nu\!\big(\tau_B<\exp\left\{\,c\,L^{\,a^{d(p)}}\right\}\big)
\;\le\;\exp\{\,c\,L^{\,a^{d(p)}-o(1)}\}\,\Big(\exp\left\{-c_1L^{\,a^{d(p)}+\epsilon}\right\}+\exp\left\{-c_2L^{\,a^{d(p)}}\right\}\Big)
\;=\;o(1).
\]
By monotonicity, the same lower bound holds for the chain started from the all-zero configuration.
Finally, since $\pi(\Omega_{p,a})\rightarrow 1$, we conclude that
\[
\mathbb{P}\!\big(\tau_{p,a}\ge \exp\{\,c\,L^{\,a^{d(p)}-o(1)}\}\big)\ \rightarrow\ 1\qquad(L\rightarrow\infty).
\]

\end{proof}

\section{Appendix}\label{appendix}
In this appendix we collect technical estimates and detailed computations that are used throughout the paper. In particular, several probabilistic bounds depend sensitively on the value of $p$, and verifying these requires a separate case-by-case analysis. To keep the main arguments streamlined, we defer these calculations here. We also provide complete proofs of auxiliary lemmas that were only stated in the main text.

\subsection{Proof of inequality~\eqref{eq:onepoint-tail}}\label{Appendix1}
We claim that for any $1<p<\infty$ and whenever $\log L \le n_+$,
\[
\bar{\pi}^p(\eta_0 \ge n_+/2) \;\le\; L^{-3}.
\]

This follows from known one-point large-deviation bounds available in the literature; for the reader’s convenience,
the relevant asymptotics are collected in
Table~\ref{tab:HpL}. Throughout, $o(1)$-terms are taken with respect to $\beta,n_+\rightarrow\infty$ and can be absorbed for large $L$ (with $\beta$ fixed and sufficiently large).

\smallskip
\noindent\textbf{Case $1<p<2$.}
From \cite[Theorem 5.1]{lubetzky2016harmonic}
\[
-\log \bar{\pi}^p(\eta_0 \ge h)=(c_p\beta+o(1))\,h^p,
\]
so
\[
\bar{\pi}^p(\eta_0 \ge n_+/2)
\;\le\; \exp\!\left(-\frac{c_p\beta+o(1)}{2^p}\,n_+^{\,p}\right)
\;\le\;\exp\!\left(-\frac{c_p\beta}{2^p}\,(\log L)^{p}+o\big((\log L)^{p}\big)\right).
\]
Since $p>1$ and $\beta$ is fixed and large, the exponent dominates $3\log L$ for large $L$, yielding the claim.

\smallskip
\noindent\textbf{Case $p=2$.}
From \cite[Theorem 3.1]{lubetzky2016harmonic}, we have
\[
-\log \bar{\pi}^2(\eta_0 \ge h)=(2\pi\beta+o(1))\,\frac{h^2}{\log h}.
\]
Hence
\[
\bar{\pi}^2(\eta_0 \ge n_+/2)
\;\le\;\exp\!\left(-\frac{2\pi\beta+o(1)}{4}\,\frac{n_+^{\,2}}{\log n_+}\right)
\;\le\;\exp\!\left(-\frac{\pi\beta+o(1)}{2}\,\frac{(\log L)^2}{\log\log L}\right)
\;\le\;L^{-3}
\]
for all large $L$, since $(\log L)^2/\log\log L \gg \log L$.

\smallskip
\noindent\textbf{Case $2<p<\infty$.}
From \cite[Theorem 5.5]{lubetzky2016harmonic}, we have
$$-\log \bar{\pi}^p(\eta_0 \ge h)\asymp \beta h^2.$$\\
So for some $c'_p>0$,
\[
\bar{\pi}^p(\eta_0 \ge n_+/2)\;\le\;\exp\!\left(-\frac{c'_p\beta}{4}\,n_+^{\,2}\right)
\;\le\;\exp\!\left(-\frac{c'_p\beta}{4}\,(\log L)^2\right)\;\le\;L^{-3}
\]
for all large $L$.

\smallskip
In all cases, the bound $\bar{\pi}^p(\eta_0 \ge n_+/2) \le L^{-3}$ holds for sufficiently large $L$ under the standing assumption that $\beta$ is sufficiently large (depending only on $p$).
\subsection{Proof of lemma~\ref{one-point-lemma}}\label{Appendix2}
\begin{proof}
We derive an upper bound on \(\hat{\pi}^{p}(\eta_0\ge a\,H(p,L)-1)\)
by combining known one-point large-deviation results from the literature, collected for convenience in Table~\ref{tab:HpL}. Throughout, \(o(1)\) denotes a quantity that vanishes as \(L\rightarrow\infty\), and \(a\in(0,1)\) is fixed.

\smallskip\noindent\textbf{Case 1: $1<p<2$.}
By the one–point LDP(\cite[Theorem 5.1]{lubetzky2016harmonic}) and the definition of $H=H(p,L)$ (\cite[Corollary 5.2]{lubetzky2016harmonic}),
\[
\hat\pi^p(\eta_0 \ge h)
=\exp\!\big(-(c_p\beta+o(1))\,h^p\big),
\qquad
\exp\!\big(-(c_p\beta+o(1))\,H^p\big)\ \ge\ \frac{5\beta}{L}\, .
\]
Setting $h=aH-1$ and noting that $(aH-1)^p\leq a^p H^p$, we obtain
\[
\hat\pi^p(\eta_0 \ge aH-1)
=\exp\!\big(-(c_p\beta+o(1))\,(aH-1)^p\big)
\ \le\ \exp\!\big(-(c_p\beta a^p+o(1))\,H^p\big)
\ \le\ \left(\frac{5\beta}{L}\right)^{a^p}.
\]
Hence, for any $\varepsilon>0$ and all sufficiently large $L$,
\[
\hat\pi^p\!\big(\eta_0 \ge aH(p,L)-1\big)\ \le\ L^{-a^p+\varepsilon}.
\]

\smallskip\noindent\textbf{Case 2: $p=2$.}
By the one–point LDP(\cite[Theorem 3.1]{lubetzky2016harmonic}) for the discrete Gaussian case and the definition of $H=H(2,L)$ (\cite[Theorem 2]{lubetzky2016harmonic}),
\[
\hat\pi^2(\eta_0 \ge h)
=\exp\!\left(-\,(2\pi\beta+o(1))\,\frac{h^2}{\log h}\right),
\qquad
(2\pi\beta)\,\frac{H^2}{\log H}\ \ge\ \log\!\frac{5\beta}{L}.
\]
For $h=aH-1$ we have $\log(aH-1)=\log H+\log\!\big(a-\tfrac{1}{H}\big)=\log H\,(1+o(1))$
and $(aH-1)^2=a^2H^2\,(1+o(1))$, hence
\[
\frac{(aH-1)^2}{\log(aH-1)}
=\frac{a^2H^2}{\log H}\,(1+o(1)).
\]
Therefore,
\[
\hat\pi^2(\eta_0 \ge aH-1)
=\exp\!\left(-\,(2\pi\beta+o(1))\,\frac{(aH-1)^2}{\log(aH-1)}\right)
=\exp\!\Big(-a^2(1+o(1))\,\log L\Big)
= L^{-a^2+o(1)}.
\]
Consequently, for any $\varepsilon>0$ and all sufficiently large $L$,
\[
\hat\pi^2\!\big(\eta_0 \ge aH(2,L)-1\big)\ \le\ L^{-a^2+\varepsilon}.
\]

\smallskip\noindent\textbf{Case 3: $2<p<\infty$.}
By the one-point LDP(\cite[Theorem 5.5]{lubetzky2016harmonic}) when $2<p<\infty$
Assume the two–sided bounds: for all sufficiently large $h$,
\begin{equation}\label{eq:two-sided}
C_1\,\beta h^2\ \le\ I(h):=-\log\hat\pi^p(\eta_0\ge h)\ \le\ C_2\,\beta h^2,
\end{equation}
for some constants $0<C_1\le C_2<\infty$. Choose the typical height $H=H(p,L)$ (\cite[Corollary 5.6]{lubetzky2016harmonic}) so that
\begin{equation}\label{eq:quantile-2p}
I\big(H(p,L)\big)\ =\ -\log\hat\pi^p(\eta_0\ge H)\ \le\ \log\!\frac{L}{5\beta}.
\end{equation}
From \eqref{eq:two-sided}–\eqref{eq:quantile-2p} it follows that
\[
\frac{1}{C_2\beta}\,\log L\,(1+o(1))\ \le\ H^2\ \le\ \frac{1}{C_1\beta}\,\log L\,(1+o(1)).
\]
Fix $a\in(0,1)$ and set $h=aH-1$. Then
\[
I(aH-1)\ \ge\ C_1\,\beta\,(aH-1)^2\ =\ C_1\,\beta\,a^2H^2\,\big(1+o(1)\big),
\]
and therefore
\begin{align*}
\hat\pi^p(\eta_0\ge aH-1)
&\le \exp\!\big(-C_1\beta a^2 H^2\,(1+o(1))\big)\\
&\le \exp\!\Big(-a^2\,\frac{C_1}{C_2}\,\log L\,\big(1+o(1)\big)\Big)
\;=\; L^{-a^2(C_1/C_2)+o(1)}.
\end{align*}
Here $o(1)\rightarrow0$ as $L\rightarrow\infty$.

Since the assumption \eqref{eq:two-sided} expresses $I(h)\asymp \beta h^2$ as $h\rightarrow\infty$, for any $\delta>0$ and all sufficiently large $h$ we may write
\[
(1-\delta)\,c_p\,\beta h^2\ \le\ I(h)\ \le\ (1+\delta)\,c_p\,\beta h^2.
\]
Taking $C_1=(1-\delta)c_p$ and $C_2=(1+\delta)c_p$ yields
\[
\frac{C_1}{C_2}\ \ge\ \frac{1-\delta}{1+\delta}\ \ge\ 1-2\delta.
\]
Letting $L$ be large (equivalently $H\rightarrow\infty$), choose $\delta=\delta(L)\rightarrow0$ slowly to obtain
\[
\hat\pi^p(\eta_0\ge aH-1)\ \le\ L^{-a^2(1-2\delta)+o(1)}.
\]
Given any $\varepsilon>0$, by taking $\delta$ and $o(1)$ smaller than $\varepsilon/4$ and $\varepsilon/2$, respectively, we conclude that
\[
\hat\pi^p(\eta_0\ge aH(p,L)-1)\ \le\ L^{-a^2+\varepsilon}
\]
for all sufficiently large $L$.
\end{proof}

\section*{Acknowledgements and Declarations}

\textbf{Acknowledgements}\\
This work was supported by the National Research Foundation of Korea (NRF) grant RS-2019-NR040050, and by Samsung Science and Technology Foundation under Project Number SSTF-BA2202-02.

\medskip

\textbf{Data availability}\\
Data sharing is not applicable to this article as no datasets were generated or analysed during the current study.

\medskip

\textbf{Declarations}\\
Conflicts of interest The author declares that he has no conflict of interest.

\bibliographystyle{plain}
\bibliography{iso}

\end{document}